\documentclass{article}
\usepackage{graphicx}
\usepackage{amsmath}
\usepackage{amssymb}
\usepackage{amsthm}
\usepackage{mathrsfs}
\newtheorem{dfn}{Definition}[section]
\newtheorem{thm}[dfn]{Theorem}
\newtheorem{prop}[dfn]{Proposition}
\newtheorem{lem}[dfn]{Lemma}
\newtheorem{cor}[dfn]{Corollary}
\newtheorem{rem}[dfn]{Remark}

\newtheorem{ass}[dfn]{Assumption}
\newtheorem{ex}[dfn]{Example}

\numberwithin{equation}{section}

\begin{document}

\title{A note on the spectral mapping theorem\\ of quantum walk models}

\author{Kaname Matsue\thanks{The Institute of Statistical Mathematics, Tachikawa, Tokyo, 190-8562, Japan} $^{,}$\footnote{\tt kmatsue@ism.ac.jp} ,\ Osamu Ogurisu\thanks{Division of Mathematical and Physical Sciences,
  Kanazawa University, Kanazawa, Ishikawa 920-1192, Japan} $^{,}$\footnote{\tt ogurisu@staff.kanazawa-u.ac.jp}$\ $ and Etsuo Segawa\thanks{Graduate School of Information Sciences, Tohoku University, Aoba, Sendai, 980-8579, Japan}  $^{,}$\footnote{\tt e-segawa@e.tohoku.ac.jp}}
\maketitle
\begin{abstract}
We discuss the description of eigenspace of a quantum walk model $U$ with an associating linear operator $T$ in abstract settings of quantum walk including the Szegedy walk on graphs.
In particular, we provide the spectral mapping theorem of $U$ without the spectral decomposition of $T$.
Arguments in this direction reveal the eigenspaces of $U$ characterized by the generalized kernels of linear operators given by $T$.
\end{abstract}

{\bf Keywords:} quantum walk model, spectral mapping theorem, generalized eigenspace.

\section{Introduction}
Quantum walks are quantum analogues of classical random walks.
Their primitive forms of the discrete-time quantum walks on $\mathbb{Z}$ can be seen in Feynman's checker board \cite{FH}. 
It is mathematically shown (e.g. \cite{K1}) that this quantum walk has a completely different limiting behavior from classical random walks, which is a typical example showing a  difficulty of intuitive description of quantum walks' behavior. 

One of main aims of studies of quantum walks from the mathematical point of view is to understand their asymptotic behavior.
There are two typical approaches for detecting asymptotic behavior of quantum walks:
\begin{itemize}
\item Calculation of density functions for long time limits of quantum walks;
\item Description of the spectrum of quantum walks as unitary operators.
\end{itemize}

In \cite{HKSS}, the spectral mapping theorem of the twisted Szegedy walk $U$ is derived with spectral decomposition of the associated self-adjoint operator $T$.
According to \cite{HKSS}, the eigenstructure of $T$ induces those of the operator of the form
\begin{equation*}
\tilde T = \begin{pmatrix}
0 & -I\\
I & 2T
\end{pmatrix},
\end{equation*}
where $I$ is the identity on an appropriate linear space, and eigenstructure of $\tilde T$ determines an invariant subspace of $U$.
As mentioned before, their arguments rely on the spectral decomposition and the eigenstructures of $T$.

\bigskip
In this paper, we propose a spectral analysis method of $U$ {\em without directly using the spectral decomposition of $T$}.
The motivation of this study is to overcome the difficulty concerning with spectral structure of $T$ such as a quantum walk model discussed in \cite{MOS}.
As a first step in this direction, we try to apply our new method to the problems whose spectral structures have been well developed, that is, 
Szegedy walks \cite{P2016, Sze} and its abstract quantum walks \cite{O2016, SS2015-1, SS2015-2}. 
We obtain a new observation of $U$ by this method which has not discussed well before, which is given as follows. 
Let
\begin{equation}
\label{gspec}
{\rm Spec}(A) = \left\{\lambda \in \mathbb{C}\mid 0\not = \exists \psi \in \bigcup_{n\in \mathbb{N}} \ker(\lambda I -A)^n\right\}
\end{equation}
for a linear operator $A$ on a Hilbert space. Then
\begin{itemize}
\item As for $\lambda \in {\rm Spec}(U|_{\mathcal{L}})\setminus \{\pm 1\}$, we have $\ker (\lambda I -U|_{\mathcal{L}}) = L(\ker(\lambda I - \tilde T))$.
\item As for $\lambda  \in {\rm Spec}(U|_{\mathcal{L}})\cap \{\pm 1\}$, we have $\ker (\lambda I -U|_{\mathcal{L}}) = L(\ker(\lambda I - \tilde T)^2\setminus \ker(\lambda I - \tilde T))$. 
\end{itemize}
Detailed descriptions of $U$, $L$ and $\mathcal{L}$ are shown in Sections \ref{section-absQW} and \ref{section-spec}.
The new insight of our study is the presence of the generalized eigenspace of the linear operator $\tilde T$.
We expect that such generalized eigenstructures reflect not only the geometric feature of underlying graphs such as their bipartiteness and underlying random walks such as their reversibility (\cite{HKSS}), but also performance of quantum search algorithms on graphs \cite{P2016, Sze}.
We also expect that our result explicitly reveals such hidden structure and will lead to deeper study of spectra and asymptotic behavior of quantum walks from the viewpoint of functional analysis and geometry.

Throughout our discussions, we consider an abstract {\em quantum walk model} given below, which extracts the essence of well-known Szegedy walks on graphs (e.g. \cite{HKSS}).
Our study will cover spectral analysis for a general class of quantum walks (e.g. \cite{MOS, P2016})
\par
Remark that there are preceding works of quantum walks in such an abstract setting: \cite{SS2015-1, SS2015-2}.
There quantum walks on {\em infinite dimensional} Hilbert spaces are considered.
On the other hand, we restrict our considerations to finite dimensional spaces in this paper.

\section{Abstract quantum walk models}
\label{section-absQW}
Throughout this paper, we study the spectrum of quantum walks in the following setting.
Note that the following settings are finite dimensional analogue of \cite{SS2015-1, SS2015-2}.
\begin{itemize}
\item $K_1$ and $K_2$ : finite dimensional Hilbert spaces over $\mathbb{C}$ with inner products $\langle \cdot, \cdot \rangle_{K_i}$.
\item $S: K_2\to K_2$ : a self-adjoint, unitary operator. 
\item $d_A : K_2\to K_1$ : a bounded linear operator with the adjoint operator $d_A^\ast : K_1\to K_2$.
\item $d_B : K_2\to K_1$ : a bounded linear operator given by $d_B = d_A S$.
The adjoint operator $d_B^\ast$ is given by the similar way to $d_A^\ast$.
\item $T : K_1\to K_1$ : a bounded linear operator given by $T = d_A d_B^\ast$, which is called {\em the discriminant operator}.
\end{itemize}
Note that the linear operator $T$ is actually self-adjoint since $S:K_2\to K_2$ is self-adjoint and unitary.

Now we assume the following property, which is crucial to our setting.
\begin{ass}
\label{ass-identity}
$d_A d_A^\ast = I : K_1\to K_1$.
\end{ass}
\begin{lem}
\label{lem-coin}
Under Assumption \ref{ass-identity}, the linear operator $C := 2d_A^\ast d_A - I : K_2 \to K_2$ is a self-adjoint and unitary operator.
\end{lem}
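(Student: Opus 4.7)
The plan is to verify the two defining properties of $C$ directly from the definition $C = 2d_A^\ast d_A - I$, using Assumption \ref{ass-identity} only at one critical step.

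First I would establish self-adjointness, which should be essentially automatic. Since taking the adjoint is conjugate-linear in scalars and satisfies $(AB)^\ast = B^\ast A^\ast$ together with $(A^\ast)^\ast = A$, we get
\begin{equation*}
C^\ast = (2d_A^\ast d_A - I)^\ast = 2(d_A^\ast d_A)^\ast - I^\ast = 2 d_A^\ast d_A - I = C.
\end{equation*}
Note that $I$ here denotes the identity on $K_2$, and we use $I^\ast = I$ on that space.

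Next I would show that $C$ is unitary. Because $C$ is already self-adjoint, it suffices to check that $C^2 = I$ on $K_2$, since then $C^\ast C = C \cdot C = C^2 = I$ and likewise $C C^\ast = I$. Expanding,
\begin{equation*}
C^2 = (2 d_A^\ast d_A - I)^2 = 4 (d_A^\ast d_A)(d_A^\ast d_A) - 4 d_A^\ast d_A + I.
\end{equation*}
The key step — and essentially the only nontrivial one — is to simplify $(d_A^\ast d_A)(d_A^\ast d_A) = d_A^\ast (d_A d_A^\ast) d_A$. Here I invoke Assumption \ref{ass-identity}, namely $d_A d_A^\ast = I$ on $K_1$, to obtain $d_A^\ast (d_A d_A^\ast) d_A = d_A^\ast d_A$. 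Substituting back,
\begin{equation*}
C^2 = 4 d_A^\ast d_A - 4 d_A^\ast d_A + I = I,
\end{equation*}
which completes the proof.

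I do not expect any real obstacle: the lemma is a direct calculation, and the only substantive content is the observation that Assumption \ref{ass-identity} makes $d_A^\ast d_A$ idempotent (indeed, it is the orthogonal projection onto the range of $d_A^\ast$), which is precisely what forces the reflection $C = 2 d_A^\ast d_A - I$ to be an involution and hence unitary.
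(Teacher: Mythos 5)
Your proof is correct and follows essentially the same route as the paper: self-adjointness is immediate (the paper checks it via inner products, you via the adjoint algebra, which is the same content), and unitarity reduces to $C^2 = I$, verified by expanding the square and using $d_A d_A^\ast = I$ exactly as the paper does. Your closing remark that $d_A^\ast d_A$ is the orthogonal projection onto $\mathrm{ran}\, d_A^\ast$ is a nice conceptual addition but not a different argument.
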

\begin{proof}
Direct calculations yield
\begin{align*}
\langle (d_A^\ast d_A) f,g \rangle_{K_2} = \langle d_A f, d_A g \rangle_{K_1} = \langle f, (d_A^\ast d_A) g \rangle_{K_2}
\end{align*}
holds for all $f,g\in K_2$, which implies that $d_A^\ast d_A$, and hence $C$, is self-adjoint.
Therefore, it is sufficient to prove that $C^\dagger C = C^2 = I : K_2\to K_2$.
We immediately have
\begin{equation*}
C^2 = (2d_A^\ast d_A - I)^2 = 4d_A^\ast d_A d_A^\ast d_A - 4d_A^\ast d_A + I = 4d_A^\ast d_A - 4d_A^\ast d_A + I = I,
\end{equation*}
which shows the statement. 
Note that we have used Assumption \ref{ass-identity} in the above calculation.
\end{proof}

Our quantum walk model is given by the following definition.
\begin{dfn}[Quantum walk model]\rm
Let $C$ be the unitary operator given in Lemma \ref{lem-coin}.
Then the operator $U=SC : K_2 \to K_2$ is also a unitary operator.
We shall say the operator $U$ {\em a quantum walk model on $K_2$} associated with the pair $(K_1, d_A)$ of additional Hilbert space $K_1$ and the linear operator $d_A$ acting on it.
\end{dfn}
Note that the discriminant operator $T$, which is the center of our considerations, and the operator $d_B$ are naturally defined by $d_A$ and $S$.
\par
Now we have defined the unitary operator $U$ as a quantum walk, while $U$ may not be seen as a \lq\lq quantum walk" at a glance.
In fact, the operator $U$ is an abstract model of well-known quantum walks such as {\em Grover walk} and {\em Szegedy walk} as follows.

\begin{ex}[Szegedy walk on a graph]\rm
\label{ex-Szegedy}
Let $G=(V(G), E(G))$ be a simple and finite graph, where $V(G)$ is the set of vertices in $G$ and $E(G)$ is the set of (undirected) edges in $G$.
It can be regarded as the digraph $G=(V(G), D(G))$, where $D(G) = \{e, \bar e \mid e\in E(G)\}$ and $\bar e = (v,u)$ for each $e=(u,v)$, $u,v\in V(G)$.
For each edge $e=(u,v)\in D(G)$, $o(e) = u$ denotes the origin of $e$ and $t(e) = v$ denotes the terminal point of $e$.
\par
Now define a $\mathbb{C}$-linear space $\ell^2(D(G))$ by 
\begin{equation*}
\ell^2(D(G)):= \left\{f : D(G)\to \mathbb{C} \mid \| f \|_{D(G)} < \infty\right\}.
\end{equation*}
Here the inner product is given by the standard inner product, that is,
\begin{equation*}
\langle f, g \rangle_{D(G)}:= \sum_{e \in D(G)}\overline{f(e)}g(e).
\end{equation*}
Let $\|\cdot \|_{D(G)}$ be the associated norm, namely, $\|f\|_{D(G)} := \langle f, f \rangle_{D(G)}^{1/2}$.
We take
\begin{equation*}
\delta^{(1)}_{e}(e'):= \begin{cases}
	1 & \text{ if $e' = e$}\\	
	0 & \text{ if $e' \not = e$}
\end{cases}
\end{equation*}
as the standard basis of $\ell^2(D(G))$. One knows that the $\mathbb{C}$-linear space $\ell^2(D(G))$ associated with the inner product $\langle \cdot, \cdot \rangle_{D(G)}$ is a Hilbert space. 
We can also define the Hilbert space $\ell^2(V(G))$ in the similar manner.
\par
Next, call a function $w : D(G)\to \mathbb{C}$ {\em a weight} if $w(e)\not = 0$ for all $e\in D(G)$ and 
\begin{equation*}
\sum_{e:o(e)=u}|w(e)|^2 = 1\quad \text{ for all }u\in V(G).
\end{equation*}
Let $S : \ell^2(D(G)) \to \ell^2(D(G))$ be defined by $Sf(e) = f(\bar e)$, called the shift operator.
Under such settings, define $d_A, d_B : \ell^2(D(G))\to \ell^2(V(G))$ as
\begin{equation*}
(d_A \phi)(v) = \sum_{e:o(e)=v} \overline{w(e)}\phi(e),\quad (d_B \phi)(v) = \sum_{e:o(e)=v} \overline{w(e)}\phi(\bar e),
\end{equation*}
respectively. It immediately follows that $d_B = d_AS$. 
Their adjoints $d_A^\ast, d_B^\ast : \ell^2(V(G))\to \ell^2(D(G))$ are defined by 
\begin{equation*}
(d_A^\ast \psi)(e) = w(e)\psi(o(e)),\quad (d_B^\ast \psi)(e) = w(\bar e)\psi(t(e)),
\end{equation*}
from the relationship $\langle \phi, d_J^\ast \psi \rangle_{D(G)} = \langle d_J \phi, \psi \rangle_{V(G)}\ (J\in \{A,B\})$ for all $\psi \in \ell^2(V(G))$ and $\phi \in \ell^2(D(G))$.
Then, from the property of the weight $w$, we can prove that $d_A d_A^\ast = d_B d_B^\ast = I : \ell^2(V(G))\to \ell^2(V(G))$ (cf. \cite{HKSS}). 
In particular, the Szegedy walk $U=SC = S(2d_A^\ast d_A - I)$ in this setting is contained in our current setting.
We often call unitary operators $S$ {\em the shift operator} and $C$ {\em the quantum coin operator}.
The discriminant operator $T = d_A d_B^\ast$ is also defined in the natural way.
\par
If we further assume that $w(e) = 1/\sqrt{\deg(o(e))}$ for all $e\in E(G)$, the resulting quantum walk model $U$ is nothing but the Grover walk on $G$.
\end{ex}


\section{Spectral analysis of abstract quantum walk models}
\label{section-spec}
\subsection{Invariant subspaces of $U$}
Now we consider ${\rm Spec}(U)$, the spectrum of $U$ in the sense of (\ref{gspec}).
As seen in preceding works such as \cite{HKSS}, ${\rm Spec}(U)$ consists of eigenvalues inherited from those of a self-adjoint operator $T:K_1\to K_1$ via the spectral mapping property and specific ones to $U$.
\begin{rem}\rm
Since $U$ is a normal operator, we do not usually need to consider the spectrum in the sense of (\ref{gspec}).
However, in the consideration of the {\em eigensystem} of $U$, we need the notion of generalized eigenspaces.
This is why we introduce (\ref{gspec}).
\end{rem}

To characterize the spectral mapping property of ${\rm Spec}(U)$, we consider the following operators.
Let $L: K_1^2\to K_2$ by $L(f,g)^T =d_A^\ast f+d_B^\ast g$, where $d_B^\ast =Sd_A^\ast$.
Note that $L$ is a linear map.
Indeed, for any $\alpha_i, \in \mathbb{C}$ and $\psi_i =(f_i,g_i)^T\in K_1^2$, $i=1,2$, we have
\begin{align*}
L(\alpha_1 \psi_1 + \alpha_2 \psi_2) &= L(\alpha_1 f_1+\alpha_2 f_2, \alpha_1 g_1+\alpha_2 g_2)^T\\
	&=d_A^\ast (\alpha_1 f_1+\alpha_2 f_2) + d_B^\ast (\alpha_1 g_1+\alpha_2 g_2)\\
	&=\sum_{i=1}^2 \alpha_i(d_A^\ast f_i + d_B^\ast g_i) = \sum_{i=1}^2 \alpha_i L\psi_i.
\end{align*}
\par
Also, let  
$\tilde{T}:K_1^2\to K_1^2$ by 
\begin{equation}
\label{tilde-T}
\tilde{T}=\begin{pmatrix} 0 & -I \\ I & 2T \end{pmatrix}.
\end{equation}

\begin{lem}
\label{ULLT}
\begin{equation} UL=L\tilde{T} : K_1^2 \to K_2.
\end{equation}
\end{lem}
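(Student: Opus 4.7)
The plan is to prove the identity by straightforward computation: apply both sides to an arbitrary element $(f,g)^T \in K_1^2$ and show the results agree, relying on four elementary facts: (i) Assumption~\ref{ass-identity}, i.e.\ $d_A d_A^\ast = I$; (ii) the definition $T = d_A d_B^\ast$; (iii) the relation $d_B^\ast = S d_A^\ast$; and (iv) the fact that $S^2 = I$, which follows from $S$ being self-adjoint and unitary.

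First I would unpack the left-hand side. For $(f,g)^T \in K_1^2$, we have
\begin{equation*}
UL(f,g)^T = S(2d_A^\ast d_A - I)(d_A^\ast f + d_B^\ast g).
\end{equation*}
Distributing and using $d_A d_A^\ast = I$ on the first term and $d_A d_B^\ast = T$ on the second yields
\begin{equation*}
UL(f,g)^T = S\bigl(d_A^\ast f + 2 d_A^\ast T g - d_B^\ast g\bigr).
\end{equation*}
Then applying $S$ term by term, using $S d_A^\ast = d_B^\ast$ and $S d_B^\ast = S^2 d_A^\ast = d_A^\ast$, we arrive at
\begin{equation*}
UL(f,g)^T = d_B^\ast f + 2 d_B^\ast T g - d_A^\ast g.
\end{equation*}

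Next I would unpack the right-hand side. From the definition of $\tilde T$,
\begin{equation*}
\tilde T(f,g)^T = (-g,\; f + 2Tg)^T,
\end{equation*}
so applying $L$ gives
\begin{equation*}
L \tilde T(f,g)^T = d_A^\ast(-g) + d_B^\ast(f + 2Tg) = -d_A^\ast g + d_B^\ast f + 2 d_B^\ast T g,
\end{equation*}
which matches the expression obtained for $UL(f,g)^T$. Since $(f,g)^T$ was arbitrary, $UL = L\tilde T$.

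The argument is entirely mechanical, so there is no real obstacle; the only thing to be careful about is not confusing the roles of $d_A$, $d_B$, and their adjoints, since the key cancellations pivot on the identities $d_A d_A^\ast = I$, $d_A d_B^\ast = T$, and $S d_B^\ast = d_A^\ast$, each of which must be applied at the correct place. No spectral information about $T$ is used, which is consistent with the paper's stated goal of avoiding the spectral decomposition of $T$.
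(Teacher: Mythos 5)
Your proof is correct and follows essentially the same route as the paper's: a direct computation of $UL\psi$ and $L\tilde T\psi$ for an arbitrary $\psi=(f,g)^T\in K_1^2$, using $d_Ad_A^\ast = I$, $d_Ad_B^\ast = T$, $Sd_A^\ast = d_B^\ast$, and $S^2=I$ to show both sides equal $d_B^\ast f + 2d_B^\ast Tg - d_A^\ast g$. No gaps; the only difference is that you spell out the intermediate cancellations slightly more explicitly than the paper does.
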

\begin{proof}
Let $\psi = (f,g)^T \in K_1^2$. 
Direct calculations yield
\begin{align*}
UL\psi &= U(d_A^\ast f + d_B^\ast g) = S(2d_A^\ast d_A - I)(d_A^\ast f + d_B^\ast g)\\
	&= S\{d_A^\ast f + (2d_A^\ast T - d_B^\ast) g\} = d_B^\ast f + (2d_B^\ast T - d_A^\ast) g
\end{align*}
On the other hand,
\begin{align*}
L\tilde T\psi &= L \begin{pmatrix}
0 & -I \\ I & 2T
\end{pmatrix}\begin{pmatrix}
f \\ g
\end{pmatrix}\\
&= L\begin{pmatrix}
-g \\ f+2Tg
\end{pmatrix}
= -d_A^\ast g + d_B^\ast (f+2Tg)
\end{align*}
and the proof is completed.
\end{proof}
Let $\mathcal{L}:= {\rm Im}L = d_A^\ast K_1+d_B^\ast K_1\subset K_2$, which is the center of our considerations in this paper.
First we have the following statement.
\begin{lem}
\label{lem-bijective}
The mapping $\tilde{T}$ is a bijective map on $K_1^2$.
\end{lem}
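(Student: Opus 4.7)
The plan is to exploit the block structure of $\tilde{T}$ directly. Since $K_1$ (and hence $K_1^2$) is finite dimensional, bijectivity is equivalent to injectivity, so it suffices to verify that $\ker \tilde{T} = \{0\}$. Given $\psi = (f,g)^T \in K_1^2$ with $\tilde{T}\psi = 0$, reading off the two components of
\[
\tilde{T}\begin{pmatrix} f \\ g \end{pmatrix} = \begin{pmatrix} -g \\ f+2Tg \end{pmatrix} = \begin{pmatrix} 0 \\ 0 \end{pmatrix}
\]
immediately forces $g = 0$ from the top row and then $f = 0$ from the bottom row, so $\psi = 0$.

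Alternatively, and in a way that avoids appealing to finite dimensionality, I would simply produce an explicit two-sided inverse. A short inspection suggests the candidate
\[
\tilde{T}^{-1} = \begin{pmatrix} 2T & I \\ -I & 0 \end{pmatrix},
\]
and both products $\tilde{T}\,\tilde{T}^{-1}$ and $\tilde{T}^{-1}\,\tilde{T}$ unfold by block multiplication to the identity on $K_1^2$ after a one-line cancellation of the $2T$ terms. Either route closes the lemma.

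The main obstacle is genuinely minimal here: the content of the statement is almost cosmetic, since $\tilde{T}$ has a strictly upper-triangular $-I$ in the top-right position together with $I$ in the bottom-left, which forces it to be invertible regardless of what $T$ is. The only judgement call in writing the proof is whether to lean on the finite-dimensional equivalence between injectivity and bijectivity (one line, but uses $\dim K_1^2 < \infty$) or to produce the explicit inverse above (still one line, and a useful formula for later use when one wants to relate generalized kernels of $\tilde{T}-\lambda I$ to eigenspaces of $U|_{\mathcal{L}}$).
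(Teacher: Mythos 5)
Your proposal is correct and takes essentially the same approach as the paper: the paper checks injectivity by the same block computation and then proves surjectivity by exhibiting the preimage $(g+2Tf,\,-f)^T$ of an arbitrary $(f,g)^T$, which is exactly your explicit inverse $\begin{pmatrix} 2T & I \\ -I & 0 \end{pmatrix}$ in disguise. Your first route (injectivity plus finite dimensionality) is also valid here, since the paper assumes $K_1$ is finite dimensional, but it is the explicit-inverse route that coincides with the paper's argument.
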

\begin{proof}
 If $ \tilde{T}\begin{pmatrix}f\\g\end{pmatrix}
=\begin{pmatrix}-g\\f+Tg\end{pmatrix} =0 $, we have
\begin{math}
  \begin{pmatrix}f\\g\end{pmatrix}=\begin{pmatrix}0\\0\end{pmatrix}
\end{math}.  Therefore, $\tilde{T}$ is injective.  On the other hand,
for any $\begin{pmatrix}f\\g\end{pmatrix}\in{K_2^2}$, we have that
\begin{displaymath}
  \tilde{T}\begin{pmatrix}g+2Tf\\-f\end{pmatrix}
  =
  \begin{pmatrix} 0 & -I \\ I & 2T \end{pmatrix}
  \begin{pmatrix}g+2Tf\\-f\end{pmatrix}
  =
  \begin{pmatrix}f\\g+2Tf-2Tf\end{pmatrix}
  =
  \begin{pmatrix}f\\g\end{pmatrix}.
\end{displaymath}
This implies that $\tilde{T}$ is surjective.
\end{proof}

Using this fact, we obtain the following.

\begin{lem}
\label{iden}
$U(\mathcal{L})=\mathcal{L}$.
\end{lem}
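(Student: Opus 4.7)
The plan is to establish the two inclusions $U(\mathcal{L})\subseteq \mathcal{L}$ and $\mathcal{L}\subseteq U(\mathcal{L})$ separately, using the two preceding lemmas as the sole inputs: the intertwining relation $UL = L\tilde{T}$ from Lemma \ref{ULLT}, and the bijectivity of $\tilde{T}$ on $K_1^2$ from Lemma \ref{lem-bijective}.

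For the forward inclusion, I would take an arbitrary element $\psi \in \mathcal{L} = \mathrm{Im}\, L$ and write it as $\psi = L\phi$ for some $\phi \in K_1^2$. Applying $U$ and using Lemma \ref{ULLT} gives $U\psi = UL\phi = L\tilde{T}\phi$, which is manifestly in $\mathrm{Im}\, L = \mathcal{L}$. This shows $U(\mathcal{L})\subseteq \mathcal{L}$ and uses only the intertwining identity.

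For the reverse inclusion, given $\psi = L\phi \in \mathcal{L}$, I would use Lemma \ref{lem-bijective} to find $\phi' \in K_1^2$ with $\tilde{T}\phi' = \phi$. Then, applying the intertwining relation in the other direction,
\begin{equation*}
\psi = L\phi = L\tilde{T}\phi' = UL\phi' = U(L\phi'),
\end{equation*}
so $\psi \in U(\mathcal{L})$, proving $\mathcal{L}\subseteq U(\mathcal{L})$. Combining the two inclusions yields $U(\mathcal{L}) = \mathcal{L}$.

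There is no real obstacle in this argument: once Lemmas \ref{ULLT} and \ref{lem-bijective} are in place, the identity $U(\mathcal{L}) = \mathcal{L}$ is essentially a formal consequence. The only subtlety worth noting is that the reverse inclusion genuinely requires the surjectivity of $\tilde{T}$; without it, one could only conclude $U(\mathcal{L})\subseteq \mathcal{L}$, which would be insufficient to treat $U|_{\mathcal{L}}$ as an operator whose spectrum we can meaningfully analyze in the subsequent sections.
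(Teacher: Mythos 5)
Your proof is correct and follows essentially the same route as the paper's: the forward inclusion via the intertwining relation $UL = L\tilde{T}$ of Lemma \ref{ULLT}, and the reverse inclusion by lifting through the surjectivity of $\tilde{T}$ from Lemma \ref{lem-bijective}. Your closing remark that surjectivity is genuinely needed for the reverse inclusion is a fair observation, consistent with how the paper deploys that lemma.
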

\begin{proof}
For any $\phi \in \mathcal{L}$, there is an element $\psi \in K_1^2$ such that $\phi = L\psi$.
Combining the statement of Lemma \ref{ULLT}, we have
\begin{equation*}
U\phi = UL\psi = L(\tilde T\psi)\in \mathcal{L},
\end{equation*}
which yields $U(\mathcal{L})\subset \mathcal{L}$.
\par
Conversely, for any $\psi \in K_1^2$, there is a unique element $\tilde \psi\in K_1^2$ such that $\tilde T \tilde \psi = \psi$, which follows from Lemma \ref{lem-bijective}; namely, $\tilde{T}$ is a bijection from $K_1^2$ onto itself.
Therefore, for any $\psi \in K_1^2$, we have
\begin{equation*}
L\psi = L\tilde T \tilde \psi = UL\tilde \psi \in U(\mathcal{L}),
\end{equation*}
which yields $\mathcal{L}\subset U(\mathcal{L})$ and the proof is completed.
\end{proof}
\begin{dfn}\rm
We say the invariant subspace $\mathcal{L}$ {\em the inherited eigenspace of $U$}.
The orthogonal complement $\mathcal{L}^\bot$ of $\mathcal{L}$ in $K$ is said {\em the birth eigenspace of $U$}.
\end{dfn}
Easy calculations yield that the subspace $\mathcal{L}^\bot$ is characterized by
\begin{equation}
\label{Lbot}
\mathcal{L}^\bot = \ker(d_A)\cap \ker(d_B).
\end{equation}

In \cite{HKSS}, ${\rm Spec}(U)$ is studied after decomposing it into two components: ${\rm Spec}(U|_{\mathcal{L}})$ and ${\rm Spec}(U|_{\mathcal{L}^\bot})$.
Note that this decomposition makes sense due to Lemma \ref{iden}.

\bigskip
Here we consider the eigenvalue problem on the inherited eigenspace $\mathcal{L}$ :
\begin{equation}
\label{target}
U|_{\mathcal{L}}\phi= \lambda \phi,\quad \phi \in \mathcal{L}.
\end{equation}

By Lemmas \ref{ULLT} and \ref{iden},
Eq.~(\ref{target}) is equivalent to the following problem: 
\par
\bigskip
{\em Find $\psi\notin \mathrm{ker}(L)$ and $\lambda\in \mathbb{C}$ such that }
\begin{equation}
\label{target'}
L(\lambda I-\tilde{T})\psi=0.
\end{equation}
By Eq.~(\ref{target'}), there are two possibilities:
\begin{description}
\item[(C1)'] $\psi\in \mathrm{ker}(\lambda I-\tilde{T})$ and $\psi\notin \mathrm{ker}(L)$.
\item[(C2)'] $\psi\notin \mathrm{ker}(\lambda I-\tilde{T})$, $\psi\in \mathrm{ker} (L(\lambda I-\tilde{T}))$ and $\psi\notin \mathrm{ker}(L)$.
\end{description}

Now a natural question arises: {\em What is $\ker L$ ?}
The following three lemmas answer this question and clarify our focus.
\par
First we have the following.
\begin{lem}
\label{kerL}
Let $L$ and $\tilde T$ be as above.
Then
\begin{equation}
\mathrm{ker}(L)=\mathrm{ker}(I-\tilde{T}^2). 
\end{equation}
\end{lem}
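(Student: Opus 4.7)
The plan is to reduce both sides of the claimed equality to the same pair of scalar equations on $(f,g)\in K_1^2$, namely $f+Tg=0$ together with $T^2g = g$.

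First I would replace $\ker L$ by $\ker(L^\ast L)$, using the standard identity $\langle L^\ast L \psi,\psi\rangle = \|L\psi\|^2$ valid in any inner product space. Unpacking the definition of the adjoint, one easily gets $L^\ast:K_2\to K_1^2$ given by $L^\ast \phi = (d_A\phi,d_B\phi)^T$, and hence
\begin{equation*}
L^\ast L \begin{pmatrix}f\\g\end{pmatrix}
= \begin{pmatrix} d_Ad_A^\ast f + d_Ad_B^\ast g\\ d_Bd_A^\ast f + d_Bd_B^\ast g\end{pmatrix}
= \begin{pmatrix} I & T\\ T & I\end{pmatrix}\begin{pmatrix}f\\g\end{pmatrix}.
\end{equation*}
This uses Assumption \ref{ass-identity} ($d_Ad_A^\ast = I$), the identity $d_Bd_B^\ast = d_ASS^\ast d_A^\ast = I$ (which follows because $S$ is self-adjoint and unitary, so $SS^\ast = S^2 = I$), the definition $T = d_Ad_B^\ast$, and the self-adjointness $T^\ast = d_Bd_A^\ast = T$. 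Thus $\ker L$ is exactly the set of $(f,g)^T\in K_1^2$ with $f+Tg = 0$ and $Tf+g=0$; eliminating $f$ from the second equation gives the equivalent characterization $f=-Tg$ and $(I-T^2)g=0$.

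Next I would handle the right-hand side by direct matrix multiplication:
\begin{equation*}
\tilde T^2 = \begin{pmatrix}0 & -I\\ I & 2T\end{pmatrix}^2 = \begin{pmatrix}-I & -2T\\ 2T & -I+4T^2\end{pmatrix},
\qquad
I-\tilde T^2 = 2\begin{pmatrix}I & T\\ -T & I-2T^2\end{pmatrix}.
\end{equation*}
A vector $(f,g)^T$ lies in $\ker(I-\tilde T^2)$ iff $f+Tg=0$ and $-Tf+(I-2T^2)g=0$; substituting $f=-Tg$ into the second equation yields $T^2g + g - 2T^2g = (I-T^2)g = 0$, so again $f=-Tg$ and $T^2g=g$.

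Since both kernels are described by the same system $\{f=-Tg,\;T^2g=g\}$, they coincide, proving $\ker L = \ker(I-\tilde T^2)$. There is no real obstacle here—it is a bookkeeping exercise—though one must remember to use $d_Bd_B^\ast=I$ (not merely Assumption \ref{ass-identity}) when computing $L^\ast L$, which is where the unitarity of $S$ enters.
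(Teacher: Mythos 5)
Your proof is correct, and it organizes the argument differently from the paper. The paper proves the two inclusions separately: for $\ker L\subset\ker(I-\tilde T^2)$ it applies $d_A$ and $d_B$ to the equation $d_A^\ast f+d_B^\ast g=0$, and for the converse (the genuinely nontrivial direction) it observes that the scalar equations $f+Tg=0$, $Tf+g=0$ say precisely that $L\psi\in\ker d_A\cap\ker d_B=\mathcal{L}^\bot$, while $L\psi\in\mathcal{L}$ by definition, so $L\psi\in\mathcal{L}\cap\mathcal{L}^\bot=\{0\}$. You instead invoke the identity $\ker L=\ker(L^\ast L)$ and compute the block operator
\begin{equation*}
L^\ast L=\begin{pmatrix} I & T\\ T & I\end{pmatrix},
\end{equation*}
after which both inclusions drop out of a single comparison of linear systems; this is a clean one-shot argument, and the Gram operator $L^\ast L$ is a useful object in its own right in this subject. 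It is worth noticing, though, that the two arguments are the same phenomenon in different packaging: since $\ker L^\ast=(\mathrm{Im}\,L)^\bot=\mathcal{L}^\bot$, the identity $\ker(L^\ast L)=\ker L$ is exactly the abstract form of the paper's step \lq\lq $L\psi\in\mathcal{L}\cap\mathcal{L}^\bot$ forces $L\psi=0$". What your route buys is economy and symmetry (no case split, no separate appeal to Eq.~(3.7)); what the paper's route buys is that the decomposition $K_2=\mathcal{L}\oplus\mathcal{L}^\bot$ and the characterization $\mathcal{L}^\bot=\ker d_A\cap\ker d_B$ are made explicit, and these are reused later (e.g.\ in Lemma \ref{lem-ker-dAdB} and Theorem \ref{thm-spec}). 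You were also right to flag that $d_Bd_B^\ast=I$ is needed and does not follow from Assumption \ref{ass-identity} alone but from the unitarity of $S$; the paper uses this same fact, albeit without comment.
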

\begin{proof}
Let $\psi = (f,g)^T\in K_1^2$.
The statement $\psi \in \ker L$ implies $d_A^\ast f + d_B^\ast g = 0$.
Acting the operator $d_A$ on both sides, we have $f+Tg = 0$. 
Similarly, acting the operator $d_B$ on both sides, we also have $Tf+g = 0$. 
These observations imply
\begin{equation*}
f\in \ker (I-T^2),\quad g\in \ker (I-T^2)\quad \text{ with }\quad f = -Tg.
\end{equation*}
On the other hand,
\begin{equation*}
\tilde T^2 = \begin{pmatrix}
0 & -I\\ I & 2T
\end{pmatrix}
\begin{pmatrix}
0 & -I\\ I & 2T
\end{pmatrix}
= \begin{pmatrix}
-I & -2T\\ 2T & 4T^2-I
\end{pmatrix}.
\end{equation*}
Thus
\begin{equation}
\label{I-T^2}
(I-\tilde T^2)\begin{pmatrix}
f \\ g
\end{pmatrix}
= 
\begin{pmatrix}
-2I & -2T\\ 2T & 4T^2-2I
\end{pmatrix}
\begin{pmatrix}
f \\ g
\end{pmatrix}
= 
\begin{pmatrix}
-2f - 2Tg \\ 2Tf + 4T^2g -2g
\end{pmatrix}
=
\begin{pmatrix}
0 \\ 0
\end{pmatrix},
\end{equation}
which implies $\mathrm{ker}(L)\subset \mathrm{ker}(1-\tilde{T}^2)$.
\par
\bigskip
Conversely, assume $\psi = (f,g)^T\in \ker(I-\tilde T^2)$.
Then (\ref{I-T^2}) (in this case, it is a consequence of the assumption $\psi \in \ker(I-\tilde T^2)$) yields
$f +Tg = 0$ and $g + Tf = 0$.
Since $d_A d_A^\ast = I$ and $d_B d_B^\ast = I$, these two equations also imply
\begin{equation*}
d_A(d_A^\ast f + d_B^\ast g) = 0\quad \text{ and }\quad d_B(d_A^\ast f + d_B^\ast g) = 0.
\end{equation*}
Thus $d_A^\ast f + d_B^\ast g \in \ker d_A\cap \ker d_B = \mathcal{L}^\ast$ and hence $d_A^\ast f + d_B^\ast g \in \mathcal{L}\cap \mathcal{L}^\bot = \{0\}$.
Finally we have $d_A^\ast f + d_B^\ast g = 0$ and hence $\ker(I-\tilde T^2) \subset \ker L$.
\end{proof}
Lemma \ref{kerL} indicates that Eq.~(\ref{target'}) is equivalent to 
\begin{equation}
\label{target''}
(I-\tilde{T}^2)(\lambda I -\tilde{T})\psi=0\quad \text{ and } \quad \psi \not \in \ker(I-\tilde T^2).
\end{equation}
Consequently, the case (C2)' is equivalent to the following:
\begin{description}
\item[(C2)''] $\psi \not \in \ker(\lambda I-\tilde T)$, $\psi \not \in \ker(I-\tilde T^2)$ and $(I-\tilde T^2)(\lambda I-\tilde T)\psi = 0$.
\end{description}

We thus have translated the structure of $\ker L$ into corresponding nullspaces of $\tilde T$.
Thanks to this fact, we can provide the following lemmas.

\begin{lem}
\label{idenspec1}
Let $\tilde T$, $U$ and $\mathcal{L}$ be as above.
Then we have
\begin{equation}
{\rm Spec}(\tilde{T})\supset {\rm Spec}(U|_{\mathcal{L}}).
\end{equation}  
\end{lem}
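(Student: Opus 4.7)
The plan is to lift a (generalized) eigenvector of $U|_{\mathcal{L}}$ to a (generalized) eigenvector of $\tilde{T}$ via the intertwining relation $UL = L\tilde{T}$ of Lemma \ref{ULLT}, using the characterization $\ker L = \ker(I - \tilde{T}^2)$ of Lemma \ref{kerL} to control the obstruction coming from the non-injectivity of $L$.

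Concretely, I would take $\lambda \in {\rm Spec}(U|_{\mathcal{L}})$ and choose a nonzero $\phi \in \mathcal{L}$ together with an integer $n \geq 1$ such that $(\lambda I - U)^n \phi = 0$. Since $\mathcal{L} = {\rm Im}\, L$, there exists $\psi \in K_1^2$ with $L\psi = \phi$, and in particular $\psi \notin \ker L$. Iterating the intertwining gives $(\lambda I - U)^n L = L(\lambda I - \tilde{T})^n$, whence
\begin{equation*}
L(\lambda I - \tilde{T})^n \psi = (\lambda I - U)^n \phi = 0,
\end{equation*}
so $(\lambda I - \tilde{T})^n \psi \in \ker L = \ker(I - \tilde{T}^2)$ by Lemma \ref{kerL}. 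Equivalently, the polynomial $p(x) := (1 - x^2)(\lambda - x)^n$ annihilates $\psi$ under $\tilde{T}$.

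Next I would apply the primary decomposition theorem on the finite-dimensional space $K_1^2$. The roots of $p$ lie in $\{1,-1,\lambda\}$, and the generalized $\pm 1$-eigenspaces of $\tilde{T}$ of multiplicity one, namely $\ker(I-\tilde{T})$ and $\ker(I+\tilde{T})$, are contained in $\ker(I-\tilde{T}^2) = \ker L$. Decomposing $\psi$ into its components in the generalized eigenspaces for $1$, $-1$, and $\lambda$ (with appropriate multiplicities dictated by $p$), the summands attached to the simple roots of $p$ lie in $\ker L$. Since $\psi \notin \ker L$, the summand belonging to $\ker(\lambda I - \tilde{T})^m$ for some $m \geq 1$ must be nonzero, producing the desired generalized eigenvector of $\tilde{T}$ for $\lambda$.

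The main obstacle is the case $\lambda \in \{\pm 1\}$, in which the root $\lambda$ of $p$ coincides with one of $\pm 1$: the multiplicity of $\lambda$ in $p$ becomes $n+1$ rather than $n$, and only the summand corresponding to the remaining simple root ($-\lambda$) is automatically annihilated by $L$. One must then verify that the surviving summand, which belongs to $\ker(\lambda I - \tilde{T})^{n+1}$, is still nonzero, forcing $\lambda$ into ${\rm Spec}(\tilde{T})$. This case distinction is precisely the phenomenon that motivates the use of the generalized spectrum (\ref{gspec}) in place of the ordinary one, and it foreshadows the finer description of $\ker(\lambda I - U|_{\mathcal{L}})$ announced in the introduction.
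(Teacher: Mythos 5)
Your proof is correct, but it takes a genuinely different route from the paper's. The paper argues by contradiction via the resolvent: starting from (\ref{target''}), i.e.\ $(I-\tilde{T}^2)(\lambda I-\tilde{T})\psi=0$ with $\psi\notin\ker(I-\tilde{T}^2)=\ker L$, it supposes $\lambda\in\rho(\tilde{T})$, observes that $(\lambda I-\tilde{T})^{-1}$ commutes with $I-\tilde{T}^2$ (both being expressions in $\tilde{T}$), and deduces $(I-\tilde{T}^2)\psi=0$, contradicting $\psi\notin\ker L$ --- a two-line argument once (\ref{target''}) is in hand. You instead argue directly: you annihilate $\psi$ by the polynomial $(1-x^2)(\lambda-x)^n$ and invoke the kernel decomposition for coprime polynomial factors, so that the components of $\psi$ at the simple roots $\pm 1$ are absorbed into $\ker L=\ker(I-\tilde{T}^2)$ and the component in $\bigcup_m\ker(\lambda I-\tilde{T})^m$ must survive because $\psi\notin\ker L$. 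What your route buys: it treats generalized eigenvectors of $U|_{\mathcal{L}}$ of arbitrary order $n\geq 1$ directly, so it never needs the (implicit) reduction in the paper from the generalized spectrum (\ref{gspec}) to genuine eigenvalues via normality of $U|_{\mathcal{L}}$; it is constructive, exhibiting the generalized eigenvector of $\tilde{T}$; and its $\lambda=\pm 1$ branch, where the root $\lambda$ acquires multiplicity $n+1$ in the annihilating polynomial, anticipates exactly the (C1)/(C2) dichotomy and the role of $\ker(I\mp\tilde{T})^2\setminus\ker(I\mp\tilde{T})$ developed later in Proposition \ref{prop-pm1}. What the paper's route buys: brevity, and a formulation purely in terms of invertibility and commutativity, which transfers more readily to settings where the spectrum is not exhausted by eigenvalues. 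One remark: the verification you flag as outstanding in the $\lambda=\pm 1$ case is the same one-line subspace argument as in the generic case --- if the component in $\ker(\lambda I-\tilde{T})^{n+1}$ vanished, $\psi$ would coincide with its remaining component and hence lie in $\ker L$ --- so your sketch is in fact already complete.
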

\begin{proof}
Assume that $\lambda \in {\rm Spec}(U|_{\mathcal{L}})$.
If Eq.~(\ref{target''}) holds with $\lambda\in \rho(\tilde{T}) := \mathbb{C}\setminus {\rm Spec}(\tilde T)$, then the operator $\lambda I-\tilde T$ has bounded inverse and $\psi=(\lambda I-\tilde{T})^{-1}\phi$, where $\phi\in \mathrm{ker}(I-\tilde{T}^2)$. 
Since $(\lambda I-\tilde{T})^{-1}$ and $(I-\tilde{T}^2)$ are commutative, $L\psi=(\lambda I-\tilde{T})^{-1}(I-\tilde{T}^2)\phi=0$, which contradicts
$\psi \not \in \ker L$.
Hence $\lambda \in {\rm Spec}(\tilde T)$ and the proof is completed.
\end{proof}
The spectra of $\tilde T$ and $U|_{\mathcal{L}}$ are coincide except $\pm1$, as shown in the following lemma.
\begin{lem}
\label{lem-iden-not-pm1}
Let $\tilde T$, $U$ and $\mathcal{L}$ be as above.
Then we have
\begin{equation*}
{\rm Spec}(\tilde{T})\setminus\{\pm 1 \}={\rm Spec}(U|_{\mathcal{L}})\setminus\{\pm 1 \}.
\end{equation*}
\end{lem}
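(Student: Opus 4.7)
The plan is to first note that Lemma \ref{idenspec1} already supplies the inclusion ${\rm Spec}(U|_{\mathcal{L}}) \setminus \{\pm 1\} \subset {\rm Spec}(\tilde T) \setminus \{\pm 1\}$, so all that remains is the reverse direction. Given $\lambda \in {\rm Spec}(\tilde T) \setminus \{\pm 1\}$, I would pick a nonzero generalized eigenvector $\psi \in \ker(\lambda I - \tilde T)^n$ for some $n$, and then push it through $L$ to produce a generalized eigenvector of $U|_{\mathcal{L}}$ at $\lambda$. Iterating the intertwining relation $UL = L\tilde T$ of Lemma \ref{ULLT} gives $(\lambda I - U)^k L = L(\lambda I - \tilde T)^k$ for every $k \geq 1$, so $L\psi \in \mathcal{L}$ automatically satisfies $(\lambda I - U|_{\mathcal{L}})^n L\psi = 0$. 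The whole issue is therefore whether $\psi$ can be chosen outside $\ker L$, since otherwise $L\psi = 0$ does not certify that $\lambda$ lies in ${\rm Spec}(U|_{\mathcal{L}})$.

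The key algebraic step is to check that $\ker L \cap \ker(\lambda I - \tilde T)^n = \{0\}$ whenever $\lambda \neq \pm 1$. By Lemma \ref{kerL}, $\ker L = \ker(I - \tilde T^2)$, and since the polynomials $1-x$ and $1+x$ are coprime in $\mathbb{C}[x]$, the standard primary decomposition yields $\ker(I - \tilde T^2) = \ker(I - \tilde T) \oplus \ker(I + \tilde T)$. Hence $\ker L$ sits inside $E_{1} \oplus E_{-1}$, where $E_{\pm 1}$ denote the full generalized eigenspaces of $\tilde T$ at $\pm 1$. Since distinct generalized eigenspaces of an operator on a finite-dimensional space intersect trivially, $\ker(\lambda I - \tilde T)^n \cap (E_1 \oplus E_{-1}) = \{0\}$ for any $\lambda \neq \pm 1$. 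Consequently any nonzero $\psi$ in the generalized eigenspace of $\tilde T$ at $\lambda$ produces a nonzero $L\psi \in \mathcal{L}$, and I conclude $\lambda \in {\rm Spec}(U|_{\mathcal{L}})$.

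The one subtlety I would expect to be the main obstacle is to keep careful track of the distinction between the ordinary kernel $\ker(I - \tilde T^2)$, which equals only the sum of the ordinary (non-generalized) $\pm 1$-eigenspaces of $\tilde T$, and the possibly strictly larger generalized eigenspaces $E_{\pm 1}$ that would appear if $\tilde T$ fails to be diagonalizable at $\pm 1$. This gap is precisely what forces the exceptional values $\pm 1$ into the statement of the lemma and will be crucial for the follow-up analysis at those two points, but it causes no trouble here: the mere containment $\ker L \subset E_1 \oplus E_{-1}$ already suffices to exclude intersection with the generalized eigenspace of $\tilde T$ at any $\lambda \neq \pm 1$. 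Beyond this bookkeeping, the argument uses only Lemmas \ref{ULLT} and \ref{kerL} together with elementary finite-dimensional linear algebra.
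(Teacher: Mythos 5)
Your proof is correct, and its first half (quoting Lemma \ref{idenspec1}) coincides with the paper's; the reverse inclusion, however, is handled by a genuinely different mechanism. The paper stays inside its case framework: for $\lambda\neq\pm1$ the coprimality of $1-x^{2}$ and $\lambda-x$ gives $\ker\bigl((I-\tilde T^{2})(\lambda I-\tilde T)\bigr)=\ker(I-\tilde T^{2})\oplus\ker(\lambda I-\tilde T)$, so a solution of (\ref{target''}) reduces to case (C1)', i.e.\ an \emph{ordinary} eigenvector of $\tilde T$ lying outside $\ker L$, whose image under $L$ is an ordinary eigenvector of $U|_{\mathcal{L}}$. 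You instead work with generalized eigenvectors throughout: you iterate Lemma \ref{ULLT} to obtain $(\lambda I-U)^{k}L=L(\lambda I-\tilde T)^{k}$, use the coprimality of $1-x$ and $1+x$ to write $\ker L=\ker(I-\tilde T)\oplus\ker(I+\tilde T)\subset E_{1}\oplus E_{-1}$ (with $E_{\mu}$ the generalized eigenspace of $\tilde T$ at $\mu$), and then appeal to the independence of generalized eigenspaces at distinct eigenvalues; note that what you actually need is that the sum $E_{\lambda}+E_{1}+E_{-1}$ is direct, which is slightly more than pairwise trivial intersection, but equally standard in finite dimensions. What your route buys: it matches the generalized-kernel definition (\ref{gspec}) verbatim, it never needs to pass from a nonzero generalized eigenspace of $\tilde T$ to a nonzero ordinary eigenvector, and it shows in passing that $L$ carries the whole generalized eigenspace of $\tilde T$ at $\lambda$ into that of $U|_{\mathcal{L}}$. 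What the paper's route buys: it is shorter and directly produces honest eigenvectors of $U|_{\mathcal{L}}$, which is the form consumed by Lemma \ref{lem-C1} and the later case analysis. One further point in your favor: the paper's proof infers $\psi\in\ker(\lambda I-\tilde T)$ from $\psi\in\ker(I-\tilde T^{2})\oplus\ker(\lambda I-\tilde T)$ together with $\psi\notin\ker(I-\tilde T^{2})$, which literally only yields that the $\ker(\lambda I-\tilde T)$-component of $\psi$ is nonzero (that component, or equivalently $L\psi$ itself, is what certifies $\lambda\in{\rm Spec}(U|_{\mathcal{L}})$); your intersection argument $\ker L\cap\ker(\lambda I-\tilde T)^{n}=\{0\}$ sidesteps this slip entirely.
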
 
\begin{proof}
By Lemma~\ref{idenspec1}, ${\rm Spec}(\tilde{T})\setminus\{\pm 1 \} \supset {\rm Spec}(U|_{\mathcal{L}})\setminus\{\pm 1 \}$. 
So we need to show that ${\rm Spec}(\tilde{T})\setminus\{\pm 1 \} \subset {\rm Spec}(U|_{\mathcal{L}})\setminus\{\pm 1 \}$. 
If $\lambda \not = \pm 1$ holds in (\ref{target''}), we have $\ker (I-\tilde{T}^2)(\lambda I-\tilde{T}) \equiv \ker(I-\tilde{T}^2)\oplus \ker(\lambda I-\tilde{T})$. 
Since $\psi \not\in \ker(I-\tilde T^2)$, which is the second condition of (\ref{target''}),
we have $\psi \in \ker (\lambda I - \tilde T)$.
This is exactly the case (C1)' and hence $\lambda \in {\rm Spec}(U|_{\mathcal{L}})$. 
\end{proof}

\bigskip
We have seen that, from Lemma \ref{kerL}, (\ref{target'}) is equivalent to (\ref{target''}).
Using this fact and Lemma \ref{lem-iden-not-pm1}, we have the following equivalences.
\begin{prop}
\label{prop-equiv}
(C1)' is equivalent to
\begin{description}
\item[(C1)] $\lambda \in {\rm Spec}(\tilde T)\setminus \{\pm 1\}$ and (\ref{target''}). 
\end{description}
Similarly, (C2)', namely (C2)'', is equivalent to
\begin{description}
\item[(C2)] $\lambda \in {\rm Spec}(\tilde T)\cap \{\pm 1\}$ and (\ref{target''}). 
\end{description}
\end{prop}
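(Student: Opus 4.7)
My plan is to reduce both equivalences to the analysis of $\ker((I-\tilde T^2)(\lambda I-\tilde T))$ via the coprime decomposition available when $\lambda\ne\pm 1$, combined with the identification $\ker L=\ker(I-\tilde T^2)$ from Lemma~\ref{kerL}. The forward direction in each case is essentially a tautology; the reverse relies on decomposing $\psi$ according to its $\ker(I-\tilde T^2)$ and $\ker(\lambda I-\tilde T)$ components and replacing $\psi$ by its $\lambda$-eigencomponent, which does not change the induced eigenvector $\phi=L\psi\in\mathcal{L}$ because the discarded summand lies in $\ker L$.

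For (C1)' $\Leftrightarrow$ (C1), the forward implication uses that $\psi\in\ker(\lambda I-\tilde T)$ automatically gives $\lambda\in{\rm Spec}(\tilde T)$ and $(I-\tilde T^2)(\lambda I-\tilde T)\psi=0$, while the side condition $\psi\notin\ker L$ rules out $\lambda=\pm 1$ since $\ker(\pm I-\tilde T)\subset\ker(I-\tilde T^2)=\ker L$. For the converse, $\lambda\ne\pm 1$ makes the polynomials $(1-x^2)$ and $(\lambda-x)$ coprime, so $\ker((I-\tilde T^2)(\lambda I-\tilde T))=\ker(I-\tilde T^2)\oplus\ker(\lambda I-\tilde T)$; writing $\psi=\psi_1+\psi_2$, the hypothesis $\psi\notin\ker(I-\tilde T^2)$ forces $\psi_2\ne 0$, and $\psi_2\in\ker(\lambda I-\tilde T)$ cannot lie in $\ker(I-\tilde T^2)$ for $\lambda\ne\pm 1$, so $(\psi_2,\lambda)$ is a genuine (C1)' witness producing the same eigenvector $L\psi=L\psi_2$.

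For (C2) $\Leftrightarrow$ (C2)'', the direction (C2) $\Rightarrow$ (C2)'' is immediate: $\lambda=\pm 1$ places $\ker(\lambda I-\tilde T)$ inside $\ker(I-\tilde T^2)$, so $\psi\notin\ker(I-\tilde T^2)$ forces $\psi\notin\ker(\lambda I-\tilde T)$. The converse is the subtle part: when $\lambda\ne\pm 1$, the coprime decomposition allows a splitting $\psi=\psi_1+\psi_2$ with both pieces nonzero, and then $L\psi=L\psi_2$ is already captured by (C1)'; hence the situations of (C2)'' that genuinely fall outside (C1)' are precisely those with $\lambda\in\{\pm 1\}$, which is (C2). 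The main obstacle here is conceptual rather than computational: the equivalence must be read at the level of eigenvectors $\phi=L\psi$ of $U|_\mathcal{L}$ rather than of individual lifts $\psi\in K_1^2$, since $L$ is not injective; once this viewpoint is made explicit, the argument is routine and reduces to tracking which component of the decomposition carries the nontrivial part of $\phi$.
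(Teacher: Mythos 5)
Your proposal is correct, and its route differs from the paper's in a way worth spelling out. The paper does not prove the two equivalences separately as you do: it first notes that \{(C1),(C2)\} and \{(C1)',(C2)''\} are both partitions of the solution set of (\ref{target''}) (disjointness is clear; coverage uses that any solution of (\ref{target''}) has $\lambda\in{\rm Spec}(\tilde T)$, as in Lemma \ref{idenspec1}), so that establishing (C1) $\Leftrightarrow$ (C1)' yields (C2) $\Leftrightarrow$ (C2)'' for free by complementation. It then proves (C1)' $\Rightarrow$ (C1) exactly as you do, via $\ker(\lambda I-\tilde T)\subset\ker(I-\tilde T^2)=\ker(L)$ for $\lambda=\pm 1$, and for (C1) $\Rightarrow$ (C1)' it cites the proof of Lemma \ref{lem-iden-not-pm1}, i.e.\ the same coprime splitting $\ker\bigl((I-\tilde T^2)(\lambda I-\tilde T)\bigr)=\ker(I-\tilde T^2)\oplus\ker(\lambda I-\tilde T)$ that you invoke. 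There, however, the paper concludes from $\psi\notin\ker(I-\tilde T^2)$ that $\psi\in\ker(\lambda I-\tilde T)$; this does not hold for the vector $\psi$ itself (only its $\lambda$-eigencomponent $\psi_2$ is forced to be nonzero), and indeed a $\psi$ with nonzero components in both summands satisfies (C1) but not (C1)', so the equivalence read pointwise in $(\psi,\lambda)$ can fail as soon as $\ker(L)\neq\{0\}$ and $\tilde T$ has an eigenvalue other than $\pm 1$. Your two moves --- replacing $\psi$ by $\psi_2$ and observing $L\psi=L\psi_2$ because the discarded summand lies in $\ker(L)$, and stating explicitly that the equivalences are assertions about the eigenvectors $\phi=L\psi$ rather than about individual lifts --- supply exactly the justification the paper's citation of Lemma \ref{lem-iden-not-pm1} lacks, and your discussion of the (C2)'' cases with $\lambda\neq\pm 1$ identifies precisely the overlap that the paper's complementation argument absorbs without comment. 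In short: the paper's partition trick buys brevity (one equivalence suffices), but it rests on the literal pointwise reading that is not quite valid; your direct two-sided treatment is longer but pins down the ``modulo $\ker(L)$'' interpretation under which the proposition is true, which is also the interpretation actually used downstream (Lemma \ref{lem-C1} and Proposition \ref{prop-pm1}).
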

\begin{proof}
Our target problem (\ref{target''}) is divided into two disjoint cases (C1) and (C2), 
and also other two disjoint cases (C1)' and (C2)''. 
It is therefore sufficient to show the equivalence between 
one of them, i.e., \lq\lq { }(C1) and (C1)' " or \lq\lq { }(C2) and (C2)" ", to show both equivalence.
\par
In what follows consider the equivalence between (C1)' and (C1).
Proof of Lemma \ref{lem-iden-not-pm1} indicates that (C1) implies (C1)'.
It is thus sufficient to consider whether (C1)' implies (C1).
If not, $\lambda \in \{\pm 1\}$ may also satisfy (C1)'.
For example, assume that $\lambda=1$ satisfies (C1)'.
We then have
\begin{equation*}
\psi \in \ker(I-\tilde T)\subset \ker(I+\tilde T)(I-\tilde T) = \ker(I-\tilde T^2),
\end{equation*}
which contradicts (\ref{target''}).
Similar arguments holds for $\lambda = -1$.
We thus obtain (C1)' is equivalent to (C1) and complete the proof.
\end{proof}

Proposition \ref{prop-equiv} guarantees that the study of ${\rm Spec}(U|_{\mathcal{L}})$ is reduced to individual cases (C1) and (C2).

\subsection{The case (C1)}
\label{section-first}

The problem in the setting of case (C1) is reduced to the one that we
find $\lambda\in \mathbb{C}$ with $\lambda \neq \pm 1$ and $\psi\neq 0$ such that $(\lambda I -\tilde{T})\psi=0$.
We then have 
	\begin{align*}
        (\lambda I -\tilde{T})\psi=0 &\Leftrightarrow \begin{pmatrix} \lambda I & I \\ -I & \lambda I-2T \end{pmatrix} \begin{pmatrix} f \\ g \end{pmatrix}= \begin{pmatrix} 0 \\ 0 \end{pmatrix} \\
        	&\Leftrightarrow \lambda f + g = 0,\quad -f + (\lambda I - 2T)g = 0 \\
                &\Leftrightarrow g\in \ker(\lambda^2 I -2\lambda T+I) \;\mathrm{and}\; f=(\lambda I -2T)g.
        \end{align*}
Since $\tilde{T}$ is invertible from Lemma \ref{lem-bijective}, $0\notin \mathrm{Spec}(\tilde{T})$ holds true, which means $\lambda\neq 0$. 
The above statement is thus equivalent to 
	\[ g\in\mathrm{ker}\left(\left(\frac{\lambda+\lambda^{-1}}{2}\right) I-T\right)\;\mathrm{and}\;f=-\lambda^{-1}g. \]
Therefore we have 
	\[ \mathrm{ker}(\lambda-\tilde{T})=\begin{pmatrix} 1 \\ -\lambda \end{pmatrix}\otimes \mathrm{ker}\left(\left( \frac{\lambda+\lambda^{-1}}{2}\right) I -T\right). \]
Putting $\mu=(\lambda+\lambda^{-1})/2$, we have 
	\[ \mathrm{ker}(e^{\pm i{\arccos \mu}}I -\tilde{T})=\begin{pmatrix} 1 \\ -e^{\pm i{\arccos \mu}} \end{pmatrix}\otimes \mathrm{ker}\left(\mu I-T\right). \]
Remarking $e^{\pm i{\arccos \mu}}=\pm 1$ if and only if $\mu=\pm 1$, we have the following lemma, which describes eigenpairs of $U$ associated with $\lambda \not = \pm 1$.

\begin{lem}
\label{lem-C1}
The eigenpair of $U$ associated with $\lambda \in \mathrm{Spec}(U|_\mathcal{L})\setminus \{\pm 1\}$ is characterized by the following.
\begin{align*}
        \mathrm{Spec}(U|_\mathcal{L})\setminus \{\pm 1\} &= \{ e^{\pm i{\arccos \mu}}| \mu\in \mathrm{Spec}(T)\setminus\{\pm 1\} \}, \\
        \mathrm{ker}(e^{\pm i{\arccos \mu}} I -U|_\mathcal{L}) &= (d_A^\ast - e^{\pm i{\arccos \mu}} d_B^\ast )\mathrm{ker}\left(\mu I-T\right).
\end{align*}
\end{lem}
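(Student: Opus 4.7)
The plan is to assemble the computations already carried out in the text just before the lemma statement, which reduce the case $\lambda \neq \pm 1$ of $(\lambda I - \tilde T)\psi = 0$ to
\begin{equation*}
\ker(e^{\pm i \arccos \mu} I - \tilde T) = \binom{1}{-e^{\pm i \arccos \mu}} \otimes \ker(\mu I - T), \qquad \mu = \tfrac{\lambda + \lambda^{-1}}{2}.
\end{equation*}
The two remaining tasks are (i) transferring this description from $\tilde T$ to $U|_\mathcal{L}$ via $L$, and (ii) verifying the spectrum equality.

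First, I would show that, for $\lambda \neq \pm 1$, the restriction of $L$ to $\ker(\lambda I - \tilde T)$ is injective. By Lemma~\ref{kerL}, $\ker L = \ker(I - \tilde T^2)$, and any $\psi$ lying in both kernels satisfies $(1 - \lambda^2)\psi = 0$, hence vanishes. Together with the intertwining $UL = L\tilde T$ of Lemma~\ref{ULLT}, this embeds $\ker(\lambda I - \tilde T)$ injectively into $\ker(\lambda I - U|_\mathcal{L})$. Evaluating $L$ on a typical element $(g, -\lambda g)^T$ yields $(d_A^\ast - \lambda d_B^\ast) g$, giving the inclusion $\ker(\lambda I - U|_\mathcal{L}) \supset (d_A^\ast - \lambda d_B^\ast)\ker(\mu I - T)$.

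For the opposite inclusion, I would take $\phi = L\psi \in \ker(\lambda I - U|_\mathcal{L})$ and invoke Proposition~\ref{prop-equiv} (case (C1)). The direct-sum decomposition $\ker((\lambda I - \tilde T)(I - \tilde T^2)) = \ker(\lambda I - \tilde T) \oplus \ker(I - \tilde T^2)$ used inside the proof of Lemma~\ref{lem-iden-not-pm1} allows me to replace $\psi$ by its $(\lambda I - \tilde T)$-eigen component without changing $L\psi$, giving $\psi = (g, -\lambda g)^T$ with $g \in \ker(\mu I - T)$. The spectrum equality then follows by rephrasing: $\lambda \in \mathrm{Spec}(U|_\mathcal{L}) \setminus \{\pm 1\}$ iff $\ker(\mu I - T) \neq \{0\}$ iff $\mu \in \mathrm{Spec}(T) \setminus \{\pm 1\}$. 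Self-adjointness of $T$ with $\|T\| \leq 1$ (from $d_A d_A^\ast = I$ and $S$ unitary) guarantees $\mu \in [-1,1]$, so $e^{\pm i \arccos \mu}$ is a well-defined pair of unit-modulus eigenvalues, distinct from $\pm 1$.

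No step looks substantively difficult; the principal bookkeeping subtlety is the passage from \lq\lq some $\psi$ with $L\psi = \phi$'' to \lq\lq some $\psi' \in \ker(\lambda I - \tilde T)$ with $L\psi' = \phi$'', and this is controlled entirely by the identity $\ker L = \ker(I - \tilde T^2)$ of Lemma~\ref{kerL} together with the triviality of $\ker(\lambda I - \tilde T) \cap \ker(I - \tilde T^2)$ when $\lambda^2 \neq 1$.
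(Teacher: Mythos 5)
Your proposal is correct and follows essentially the same route as the paper: the paper's own argument for Lemma~\ref{lem-C1} is precisely the computation of $\ker(\lambda I-\tilde{T})=\begin{pmatrix}1\\-\lambda\end{pmatrix}\otimes\ker\left(\frac{\lambda+\lambda^{-1}}{2}I-T\right)$ displayed just before the lemma, combined with the equivalences of Section~\ref{section-spec} (Lemmas~\ref{ULLT}, \ref{kerL}, \ref{lem-iden-not-pm1} and Proposition~\ref{prop-equiv}), which is exactly what you assemble. Your explicit checks --- injectivity of $L$ on $\ker(\lambda I-\tilde{T})$ via $\ker L=\ker(I-\tilde{T}^2)$, and the replacement of an arbitrary preimage $\psi$ by its $(\lambda I-\tilde{T})$-eigencomponent using the direct-sum decomposition --- only spell out details the paper leaves implicit.
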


\subsection{The case (C2)}
\noindent
Our main aim here is the complete description of ${\rm Spec}(U|_\mathcal{L})$ in the case (C2).
The key point is the structure of the eigenspaces $\ker (I\pm \tilde T)$ as well as the {\em generalized eigenspaces} $\ker (I\pm \tilde T)^n$, $n\geq 2$.
To this end, we provide the following lemma.
\begin{lem}
\label{pow-I-tilde-T}
For each $n\geq 1$, we have
\begin{align}
\label{pow+1}
(I-\tilde{T})^{2n} &= 2^n (-\tilde T)^n \begin{pmatrix}
(I-T)^n & 0\\
0 & (I-T)^n 
\end{pmatrix},\\
\label{pow-1}
(I+\tilde{T})^{2n} &= 2^n \tilde T^n \begin{pmatrix}
(I+T)^n & 0\\
0 & (I+T)^n 
\end{pmatrix}.
\end{align}
\end{lem}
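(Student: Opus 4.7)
The plan is a straightforward induction on $n$, after first verifying the two identities for $n=1$ by direct block-matrix computation. For the base case, I would write out $I\mp\tilde T$ as $\begin{pmatrix} I & \pm I \\ \mp I & I\mp 2T \end{pmatrix}$, square them explicitly, and observe that the results factor as
\begin{equation*}
(I-\tilde T)^2 \;=\; 2(-\tilde T)\begin{pmatrix} I-T & 0 \\ 0 & I-T \end{pmatrix},\qquad
(I+\tilde T)^2 \;=\; 2\tilde T\begin{pmatrix} I+T & 0 \\ 0 & I+T \end{pmatrix}.
\end{equation*}
These are precisely (\ref{pow+1}) and (\ref{pow-1}) with $n=1$, so the base cases are settled.

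For the inductive step I would write $(I\mp\tilde T)^{2(n+1)} = (I\mp\tilde T)^{2n}(I\mp\tilde T)^{2}$, substitute the inductive hypothesis on the left factor and the base case on the right, and hope to collect the factors into the desired form. The only non-routine point is that this collection requires pushing the diagonal block matrix $D_n^{\mp} := \begin{pmatrix} (I\mp T)^n & 0 \\ 0 & (I\mp T)^n \end{pmatrix}$ past the factor $\mp\tilde T$. I would verify by one short computation that $D_n^{\mp}$ commutes with $\tilde T$, which boils down to the obvious commutation $T(I\mp T)^n = (I\mp T)^n T$ between $T$ and any polynomial in $T$. After this swap, the two diagonal-block factors multiply to $D_{n+1}^{\mp}$ and the two copies of $\mp\tilde T$ multiply to $(\mp\tilde T)^{n+1}$, yielding the desired identity.

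The main \emph{obstacle}, such as it is, is purely bookkeeping: keeping track of the signs in the two cases simultaneously and making sure that on the right-hand side one factor of $\mp\tilde T$ (rather than $\tilde T$) appears in the $(-)$-identity. I do not anticipate any genuine analytic difficulty, since no use is made of unitarity, self-adjointness, or Assumption \ref{ass-identity} — only the defining block form of $\tilde T$ and the commutativity of $T$ with polynomials in itself are needed. If desired, the two identities could be proved uniformly by introducing $\epsilon \in \{+1,-1\}$ and treating $(I+\epsilon\tilde T)^{2n} = 2^n(\epsilon\tilde T)^n \mathrm{diag}((I+\epsilon T)^n,(I+\epsilon T)^n)$ as a single statement, but I would likely prefer to present the two cases separately for readability.
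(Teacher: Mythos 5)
Your proposal is correct and follows essentially the same route as the paper's own proof: a direct block-matrix computation of $(I\mp\tilde T)^2$ to get the factored base case, followed by induction in which the diagonal factor $\mathrm{diag}((I\mp T)^n,(I\mp T)^n)$ is pushed past $\mp\tilde T$ using the commutation of $T$ with polynomials in itself. The paper records this commutation only for the $n=1$ diagonal block and reuses it inside the induction, but that is a cosmetic difference, not a mathematical one.
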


\begin{proof}
First consider $(I-\tilde{T})^{2n}$.
The case $n=0$ is trivial.
We have
\begin{align*}
(I-\tilde{T})^2 &= \begin{pmatrix}
I & I\\
-I & I-2T
\end{pmatrix}\begin{pmatrix}
I & I\\
-I & I-2T
\end{pmatrix}\\
&=\begin{pmatrix}
0 & 2(I-T)\\
-2(I-T) & -4T(I-T)
\end{pmatrix}
=2\begin{pmatrix}
0 & I\\
-I & -2T
\end{pmatrix}\begin{pmatrix}
I-T & 0\\
0 & I-T
\end{pmatrix},
\end{align*}
which means (\ref{pow+1}) for $n=1$.
Notice that
\begin{equation}
\label{comm+1}
\begin{pmatrix}
I-T & 0\\
0 & I-T
\end{pmatrix}
\begin{pmatrix}
0 & I\\
-I & -2T
\end{pmatrix}
=
\begin{pmatrix}
0 & I\\
-I & -2T
\end{pmatrix}\begin{pmatrix}
I-T & 0\\
0 & I-T
\end{pmatrix}.
\end{equation}
Assume that (\ref{pow+1}) holds for some $n=n_0 \geq 1$. Then
\begin{align*}
(I-\tilde{T})^{2(n_0+1)} &= 
2^{n_0} (-\tilde T)^{n_0} \begin{pmatrix}
(I-T)^{n_0} & 0\\
0 & (I-T)^{n_0} 
\end{pmatrix}\begin{pmatrix}
0 & 2(I-T)\\
-2(I-T) & -4T(I-T)
\end{pmatrix}\\
&=2^{n_0+1} (-\tilde T)^{n_0} \begin{pmatrix}
(I-T)^{n_0} & 0\\
0 & (I-T)^{n_0} 
\end{pmatrix} (-\tilde T)\begin{pmatrix}
I-T & 0\\
0 & I-T
\end{pmatrix}\\
&=2^{n_0+1} (-\tilde T)^{n_0+1} \begin{pmatrix}
(I-T)^{n_0+1} & 0\\
0 & (I-T)^{n_0+1}
\end{pmatrix}
\end{align*}
by (\ref{comm+1}), which proves (\ref{pow+1}) for $n=n_0+1$.
By induction, (\ref{pow+1}) holds for all $n\geq 0$.
\par
\bigskip
Next consider $(I+\tilde{T})^{2n}$.
The case $n=0$ is trivial.
We have
\begin{align*}
(I+\tilde{T})^2 &= \begin{pmatrix}
I & -I\\
I & I+2T
\end{pmatrix}\begin{pmatrix}
I & -I\\
I & I+2T
\end{pmatrix}\\
&=\begin{pmatrix}
0 & -2(I+T)\\
2(I+T) & 4T(I+T)
\end{pmatrix}
=2\begin{pmatrix}
0 & -I\\
I & 2T
\end{pmatrix}\begin{pmatrix}
I+T & 0\\
0 & I+T
\end{pmatrix},
\end{align*}
which means (\ref{pow-1}) for $n=1$.
By the same arguments as the proof of (\ref{pow+1}) we obtain (\ref{pow-1}).
\end{proof}

With the help of Lemma \ref{pow-I-tilde-T}, we can prove the following, which gives us the description of eigenspaces of $\tilde T$ associated with eigenvalues $\lambda = \pm 1$.
\begin{prop}
\label{prop-gen-ker}
	\begin{align} 
	\label{ker+1}
        \mathrm{ker}(I-\tilde{T}) &= \begin{pmatrix}
1 \\ -1
\end{pmatrix} \otimes \ker(I-T),\\
	\label{gker+1}
        \mathrm{ker}(I-\tilde{T})^n &= \mathbb{C}^2 \otimes \mathrm{ker}(I-T) \;\;(n\geq 2),\\
	\label{ker-1}
        \mathrm{ker}(I+\tilde{T}) &=  \begin{pmatrix}
1 \\ 1
\end{pmatrix} \otimes \ker(I+T), \\
	\label{gker-1}
        \mathrm{ker}(I+\tilde{T})^n &= \mathbb{C}^2 \otimes \mathrm{ker}(I+T) \;\;(n\geq 2).
        \end{align}
\end{prop}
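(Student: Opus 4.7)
My plan is to treat the proposition in two parts: the eigenspace identities (\ref{ker+1}) and (\ref{ker-1}) follow from a direct unpacking of the $2\times 2$ block system $(I \mp \tilde T)(f,g)^T = 0$, while the generalized eigenspace identities (\ref{gker+1}) and (\ref{gker-1}) follow from Lemma \ref{pow-I-tilde-T} combined with the invertibility of $\tilde T$ (Lemma \ref{lem-bijective}) and the self-adjointness of $T$ on the finite-dimensional space $K_1$.

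For (\ref{ker+1}), I would write $\psi = (f,g)^T$ and expand $(I - \tilde T)\psi = 0$ as the pair of equations $f + g = 0$ and $-f + (I - 2T)g = 0$; the first yields $f = -g$, and substituting into the second reduces to $Tg = g$. Thus the solution set is exactly $\{(-g,g)^T : g \in \ker(I-T)\}$, which coincides with $(1,-1)^T \otimes \ker(I-T)$ up to the obvious sign of the parametrization. An identical computation applied to $(I + \tilde T)\psi = 0$ gives $f = g$ and $Tg = -g$, yielding (\ref{ker-1}).

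For the generalized eigenspaces, Lemma \ref{pow-I-tilde-T} expresses $(I - \tilde T)^{2n}$ as the product of the block-diagonal operator $\diag((I-T)^n,(I-T)^n)$ with the prefactor $2^n(-\tilde T)^n$. Because $\tilde T$ is a bijection by Lemma \ref{lem-bijective}, this prefactor is invertible, so
\begin{equation*}
\ker((I - \tilde T)^{2n}) = \mathbb{C}^2 \otimes \ker((I - T)^n)
\end{equation*}
for every $n \geq 1$. Since $T$ is self-adjoint on the finite-dimensional Hilbert space $K_1$, it is diagonalizable, and hence $\ker((I-T)^n) = \ker(I-T)$ for every $n \geq 1$. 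Already the case $n = 1$ therefore produces $\ker((I-\tilde T)^2) = \mathbb{C}^2 \otimes \ker(I-T)$, and the same chain of reasoning with the signs flipped yields the analogue for $(I+\tilde T)$.

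To conclude (\ref{gker+1}) and (\ref{gker-1}) for arbitrary $n \geq 2$, I would sandwich between two known extremes: the ascending chain of generalized kernels gives
\begin{equation*}
\ker((I \mp \tilde T)^2) \subset \ker((I \mp \tilde T)^n) \subset \ker((I \mp \tilde T)^{2n}),
\end{equation*}
and both extremes coincide with $\mathbb{C}^2 \otimes \ker(I \mp T)$ by the previous paragraph, forcing equality throughout. The only mildly delicate point, which I would flag as the main substantive step, is the collapse $\ker((I \mp T)^n) = \ker(I \mp T)$; this rests on $T$ being self-adjoint and hence diagonalizable on the finite-dimensional $K_1$, and it is the precise mechanism by which the generalized kernels of $\tilde T$ at $\pm 1$ stabilize already at length two rather than continuing to grow.
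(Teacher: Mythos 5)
Your proof is correct and follows essentially the same route as the paper's: direct expansion of the $2\times 2$ block system for the eigenspaces (\ref{ker+1}) and (\ref{ker-1}), then Lemma \ref{pow-I-tilde-T} combined with the invertibility of $\tilde T$ (Lemma \ref{lem-bijective}) and the collapse $\ker(I\mp T)^n = \ker(I\mp T)$ from the self-adjointness of $T$ for the generalized kernels, finishing with the ascending chain of kernels. The only cosmetic difference is that you sandwich $\ker(I\mp\tilde T)^n$ between the even powers $2$ and $2n$ in one step, whereas the paper first shows $\ker(I\mp\tilde T)^4 = \ker(I\mp\tilde T)^2$ and then fills in the remaining powers recursively; both arguments rest on the identical mechanism.
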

\begin{proof}
We only prove (\ref{ker+1}) and (\ref{gker+1}).
Remaining statements (\ref{ker-1}) and (\ref{gker-1}) can be proved by the same arguments.
\par
First we have
\begin{align}
\label{pow-I-T}
I-\tilde T &= \begin{pmatrix}
I & I\\
-I & I-2T
\end{pmatrix},\quad 
(I-\tilde T)^2 = \begin{pmatrix}
0 & 2(I-T)\\
-2(I-T) & -4T(I-T)
\end{pmatrix}.
\end{align}
For $\psi = (f,g)^T\in \ker (I-\tilde T)$, we have
\begin{equation*}
\begin{pmatrix}
I & I\\
-I & I-2T
\end{pmatrix}
\begin{pmatrix}
f \\ g
\end{pmatrix}
=
\begin{pmatrix}
f+g \\ -f + (I-2T)g
\end{pmatrix}
=
\begin{pmatrix}
0 \\ 0
\end{pmatrix},
\end{equation*}
which yields $f=-g$ and $Tf=f$, and hence $\mathrm{ker}(I-\tilde{T}) \subset \left\{ (f,-f)^T: f\in \mathrm{ker}(I-T) \right\}$.
The converse is trivial and (\ref{ker+1}) holds true.
\par
Secondly, consider $(I-\tilde T)^2 \psi = 0$.
(\ref{pow-I-T}) immediately yields
\begin{align*}
(I-\tilde T)^2 \psi &= \begin{pmatrix}
0 & 2(I-T)\\
-2(I-T) & -4T(I-T)
\end{pmatrix}
\begin{pmatrix}
f \\ g
\end{pmatrix}
=
\begin{pmatrix}
0 \\ 0
\end{pmatrix}\\
&\Leftrightarrow
\quad f\in \ker(I-T),\quad g\in \ker(I-T).
\end{align*}
Next consider $(I-\tilde T)^4 \psi = 0$.
By Lemma \ref{pow-I-tilde-T} with $n=2$, we have
\begin{align*}
(I-\tilde T)^4 \psi &= 2^2(-\tilde T)^2\begin{pmatrix}
(I-T)^2 & 0\\
0 & (I-T)^2
\end{pmatrix}
\begin{pmatrix}
f \\ g
\end{pmatrix}
=
\begin{pmatrix}
0 \\ 0
\end{pmatrix}.
\end{align*}
Since the operator $-\tilde T$ is invertible by Lemma \ref{lem-bijective}, the above equation implies $f\in \ker(I-T)^2$ and $g\in \ker (I-T)^2$.
\par
Here note that $\ker (I-T)^2 = \ker (I-T)$, since $T$ is Hermitian and hence diagonalizable and it implies $\ker (I-T)^2 = \ker (I-T)$.
Thus we observe that $f\in \ker (I-T)$ and $g\in \ker (I-T)$.
In particular, $\ker (I-\tilde T)^4 = \ker (I-\tilde T)^2$ holds.
The similar arguments holds for all $n$ by Lemma \ref{pow-I-tilde-T} and the invertibility of $-\tilde T$.
\par
In general, $\ker (I-\tilde T)^n \subset \ker (I-\tilde T)^{n+1}$ holds for all $n$, which is a fundamental property from linear algebra.
Combining the fact $\ker (I-\tilde T)^4 = \ker (I-\tilde T)^2$, we have $\ker (I-\tilde T)^2 \subset \ker (I-\tilde T)^3 \subset \ker (I-\tilde T)^4 = \ker (I-\tilde T)^2$.
\par
Finally we have $\ker (I-\tilde T)^n = \ker (I-\tilde T)^2$ for all $n\geq 2$ by recursive arguments and the proof is completed.
\end{proof}

\begin{rem}\rm
In the proof of Proposition \ref{prop-gen-ker}, the operator $T$ being self-adjoint is used to guarantee $\ker(I-T)^2 = \ker(I-T)$.
Our arguments also hold even for non-Hermitian operators as long as the algebraic multiplicity of every eigenvalues coincides with their geometric multiplicity.
\end{rem}

We provide a lemma from linear algebra before stating the next proposition.
\begin{lem}
\label{lem-directsum}
Let $U, V$ and $W$ be vector spaces such that $U\subset V$ and that $V\cap W=\{0\}$.
Then the relationship $\phi\in (U\oplus W)\cap V$ implies $\phi\in U$.
\end{lem}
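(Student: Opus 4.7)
The plan is to unpack the direct-sum decomposition of $\phi$ and then exploit the hypothesis $V \cap W = \{0\}$ to eliminate the $W$-component.

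First I would write $\phi \in U \oplus W$ as $\phi = u + w$ with $u \in U$ and $w \in W$; this decomposition is uniquely determined by the direct-sum structure, though uniqueness is not actually needed here. Next, I would use the hypothesis $U \subset V$ to deduce that $u \in V$, and then combine this with $\phi \in V$ to conclude that $w = \phi - u$ lies in $V$. Since $w$ already lies in $W$ by construction, it follows that $w \in V \cap W = \{0\}$, so $w = 0$ and thus $\phi = u \in U$.

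The argument is essentially a one-line manipulation once the sum decomposition is written down, so there is no real obstacle: the only subtle point is making clear that $U \subset V$ is used to force $u \in V$ (and hence to trap $w$ inside $V$), rather than anything about $W$ being transverse to $U$. I would keep the proof to three or four lines in the paper.
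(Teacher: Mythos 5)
Your proof is correct, and it is in fact more direct than the one in the paper. Both arguments start from the decomposition $\phi = u + w$ with $u \in U$, $w \in W$, and both ultimately hinge on the hypothesis $V \cap W = \{0\}$, but the routes diverge after that. You subtract immediately: since $u \in U \subset V$ and $\phi \in V$, the component $w = \phi - u$ lies in $V$, hence in $V \cap W = \{0\}$, and you are done in one step, with no appeal to uniqueness of any decomposition. The paper instead introduces a \emph{second} decomposition $\phi = f' + g'$ inside $V \oplus W$ (which exists because $U \oplus W \subset V + W$ and $V \cap W = \{0\}$ makes that sum direct), compares the two decompositions to deduce $f = f'$ and $g = g'$, and then uses $\phi \in V$ together with uniqueness in $V \oplus W$ to force $g' = 0$. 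Your version buys brevity and fewer moving parts — it never needs $V + W$ to be a direct sum at all — while the paper's version makes the role of the direct-sum structure of $V \oplus W$ explicit; for the purposes of Proposition \ref{prop-pm1}, where this lemma is applied, your shorter argument suffices.
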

\begin{proof}
Let $\phi\in (U\oplus W)\cap V$. 
Then there are unique elements $f\in U$ and $g\in W$ such that $\phi = f+g$.
On the other hand, we also have $\phi\in V\oplus W$ since $U\subset V$ and there are also unique elements $f'\in V$ and $g'\in W$ such that $\phi = f'+g'$.
Since
\begin{equation*}
0 = \phi - \phi = (f-f') + (g-g'),
\end{equation*}
we have $V\ni f-f' = -(g-g')\in W$. By the relationship $V\cap W=\{0\}$, $f=f'$ and $g=g'$ hold.
Now we assumed $\phi\in V$ (by $\phi\in (U\oplus W)\cap V$), which implies $\phi = f' \in V$ and $g' = 0$ by the uniqueness of decompositions in $V\oplus W$.
Finally we obtain $\phi = f' = f \in U$.
\end{proof}

The following proposition characterizes the eigenvalues $\pm 1$ of $U|_{\mathcal{L}}$.

\begin{prop}
\label{prop-pm1}
For $\psi = (f,g)^T\in K_1^2$, the following four statements are equivalent.
\begin{description}
\item[{\rm (i)}] $UL\psi = \pm L\psi \not = 0$.
\item[{\rm (ii)}] $\psi \in Z\setminus\ker(L)$ where $Z=\ker(I-\tilde T)^2 (I+\tilde T).$
\item[{\rm (iii)}] $\psi =\psi_1 + \psi_2$, such that $\psi_1 \in X_\mp \equiv \ker(I\mp \tilde T)^2 \setminus \ker(I\mp \tilde T)$ with $\psi_1\not = 0$ and that $\psi_2 \in \ker(L)$.
\item[{\rm (iv)}] $f=f_1+f_2$ and $g=g_1+g_2$, such that $f_1,g_1\in \ker(I\mp T)$ with $f_1\pm g_1\not = 0$ and that $(f_2,g_2)^T\in \ker(L)$.
\end{description}
\end{prop}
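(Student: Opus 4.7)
The plan is to prove the cyclic equivalence (i)$\Leftrightarrow$(ii)$\Leftrightarrow$(iii)$\Leftrightarrow$(iv), treating the $+1$ case (upper signs) in detail; the $-1$ case then follows by the obvious symmetry interchanging $I-\tilde T$ with $I+\tilde T$ and $\ker(I-T)$ with $\ker(I+T)$ (and replacing $Z$ by $\ker(I+\tilde T)^2(I-\tilde T)$), allowed by the symmetric halves of Proposition \ref{prop-gen-ker}.

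For (i)$\Leftrightarrow$(ii), I would apply Lemma \ref{ULLT} to rewrite $UL\psi=L\psi$ as $L(I-\tilde T)\psi=0$, which by Lemma \ref{kerL} is $(I-\tilde T^2)(I-\tilde T)\psi=0$. Since $(I-\tilde T^2)(I-\tilde T)=(I-\tilde T)^2(I+\tilde T)$, this is precisely $\psi\in Z$; the requirement $UL\psi\ne 0$ is just $L\psi\ne 0$, i.e.\ $\psi\notin\ker L$.

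The main step is (ii)$\Leftrightarrow$(iii). Since $(x-1)^2$ and $(x+1)$ are coprime, a Bezout-style argument yields the direct sum $Z=\ker(I-\tilde T)^2\oplus\ker(I+\tilde T)$; disjointness is easy, for any $v\in\ker(I+\tilde T)$ satisfies $(I-\tilde T)v=2v$ and hence $(I-\tilde T)^2 v=4v$, so $v\in\ker(I-\tilde T)^2$ forces $v=0$. Because $\ker(I+\tilde T)\subset\ker L$, this yields $Z=\ker(I-\tilde T)^2+\ker L$, but the sum is no longer direct. The delicate point is to identify the overlap $\ker(I-\tilde T)^2\cap\ker L$: using $\ker L=\ker(I-\tilde T)\oplus\ker(I+\tilde T)$ together with the disjointness just observed, Lemma \ref{lem-directsum} (with $U=\ker(I-\tilde T)$, $V=\ker(I-\tilde T)^2$, $W=\ker(I+\tilde T)$) gives $\ker(I-\tilde T)^2\cap\ker L=\ker(I-\tilde T)$. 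Consequently, in any decomposition $\psi=\psi_1+\psi_2$ with $\psi_1\in\ker(I-\tilde T)^2$ and $\psi_2\in\ker L$, the condition $\psi\notin\ker L$ is exactly equivalent to $\psi_1\in X_-=\ker(I-\tilde T)^2\setminus\ker(I-\tilde T)$ with $\psi_1\ne 0$.

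Finally (iii)$\Leftrightarrow$(iv) is a direct unfolding via Proposition \ref{prop-gen-ker}: $\psi_1=(f_1,g_1)^T\in\ker(I-\tilde T)^2$ iff $f_1,g_1\in\ker(I-T)$, and $\psi_1\in\ker(I-\tilde T)$ iff additionally $f_1+g_1=0$, so $\psi_1\in X_-$ translates into the component condition $f_1+g_1\ne 0$; writing $\psi_2=(f_2,g_2)^T\in\ker L$ gives the remaining component. The main obstacle is precisely the non-uniqueness of the sum $\ker(I-\tilde T)^2+\ker L$ in the (ii)$\Leftrightarrow$(iii) step, which is exactly why Lemma \ref{lem-directsum} has been prepared in advance and serves as the hinge of the entire argument.
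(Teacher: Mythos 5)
Your proposal is correct and follows essentially the same route as the paper's proof: the same chain (i)$\Leftrightarrow$(ii) via Lemmas \ref{ULLT} and \ref{kerL}, the same use of Lemma \ref{lem-directsum} with $U=\ker(I-\tilde T)$, $V=\ker(I-\tilde T)^2$, $W=\ker(I+\tilde T)$ as the hinge of (ii)$\Leftrightarrow$(iii), and the same unfolding of (iii)$\Leftrightarrow$(iv) through Proposition \ref{prop-gen-ker}. Your only (welcome) streamlining is to package the two contradiction arguments of the paper into the single intersection identity $\ker(I-\tilde T)^2\cap\ker(L)=\ker(I-\tilde T)$, and you rightly note that for the $-1$ case $Z$ should read $\ker\bigl((I+\tilde T)^2(I-\tilde T)\bigr)$, which the paper leaves implicit.
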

\begin{proof}
First notice that since $I-\tilde T^2=(I+\tilde T)(I-\tilde T)$, 
if $\tilde{T}\psi=\pm\psi\ne0$, 
then $L\psi=0$ by Lemma~\ref{kerL}, 
and thus $L\psi$ is {\em not} an eigenfunction of $U|_{\mathcal{L}}$.
\par
In what follows we consider $\lambda = +1\in {\rm Spec}(U|_\mathcal{L})$.
The similar arguments to below yield the corresponding equivalence for the eigenvalue $\lambda = -1$.
\par
Since $+1\in {\rm Spec}(U|_\mathcal{L})$ if and only if 
there exists $\phi\in\mathcal{L}$ such that $U\phi=\phi\ne0$, we
can see that 
$1\in {\rm Spec}(U|_\mathcal{L})$ is equivalent to that
there exists $\psi \in K_1^2$ such that
\begin{equation*}
  L\psi \not = 0\quad \text{ and }\quad L(I-\tilde T)\psi =0,
\end{equation*}
which is equivalent to 
\begin{equation*}
  L\psi \not = 0\quad \text{ and }\quad (I+\tilde T)(I-\tilde T)^2 \psi=0
\end{equation*}
by Lemma \ref{kerL}. 
This is also equivalent to
\begin{equation*}
\psi \not \in \ker(L) \quad \text{ and }\quad \psi \in Z=\ker(I-\tilde T)^2 (I+\tilde T).
\end{equation*}
Thus we obtain that (i) and (ii) are equivalent to each other.
Note that $Z$ is a vector space.
\par
Now we assume $\psi \in Y\equiv Z\setminus \ker(L)$.
Then there are two vectors $\psi_1\in \ker(I-\tilde T)^2$ and $\psi_2 \in \ker(I+\tilde T)$ such that $\psi = \psi_1 + \psi_2 \not \in \ker(L)$.
Now it holds that $\psi_2 \in \ker(I+\tilde T) \subset \ker(L)$. 
If $\psi_1$ belongs to $\ker(I-\tilde T)$, then $\psi_1\in \ker(L)$ also holds and hence $\psi = \psi_1 + \psi_2 \in \ker(L)$, which is contradiction.
Therefore $\psi_1\in \ker(I-\tilde T)^2\setminus \ker(I-\tilde T) = X_-$.
This shows \lq\lq (ii) $\Rightarrow$ (iii)".
\par
\bigskip
Conversely, we assume (iii), namely, $\psi= \psi_1'+\psi_2'$ with $0\not =\psi_1' \in X_-$ and $\psi_2'\in \ker(L)$.
First we prove $\psi\in Z$. 
Since $\ker(I-\tilde T)\subset \ker(I-\tilde T)^2$, we have $\ker(L) = \ker(I-\tilde T)\oplus \ker(I+\tilde T)\subset Z$, which implies $\psi_2'\in Z$.
We also have $\psi_1'\in X_- \subset \ker(I-\tilde T)^2\subset Z$ and hence $\psi = \psi_1'+\psi_2' \in Z$.
Next we prove $\psi\not \in \ker(L)$.
By definition, we have $\psi_2'\in \ker(L)$ and hence our claim is reduced to the statement $(0\not =)\psi_1'\not \in \ker(L)$.
Assume that it is not the case, namely, $\psi_1'\in \ker(L)$.
Since $X_-\subset \ker(I-\tilde T)^2$ and 
\begin{equation*}
\ker(I-\tilde T)^2\cap \ker(I+\tilde T) = \{0\}
\end{equation*}
holds, we have $(0\not =) \psi_1' \in \ker(I- \tilde T)$, which follows from Lemma \ref{lem-directsum} with $U=\ker(I-\tilde T), V= \ker(I-\tilde T)^2$ and $W=\ker(I+\tilde T)$, and it contradicts the assumption $\psi'_1\in X_-$. 
It thus holds that $\psi_1'\not \in \ker(L)$.
Finally we have $\psi = \psi_1' + \psi_2' \not \in \ker(L)$, since $\ker(L)$ is a vector space, and hence $\psi\in Z\setminus \ker(L) = Y$.
\par
Consequently, (ii) and (iii) are equivalent.
\par
\bigskip
Next assume $\psi=\psi_1+\psi_2$ with $0\not = \psi_1\in X_-$ and $\psi_2\in \ker(L)$.
Write $\psi_i = (f_i, g_i)^T\in K_1^2$ for $i=1,2$.
By Proposition \ref{prop-gen-ker}, $\psi_1 = (f_1,g_1)^T$ with $f_1 + g_1 \not = 0$ and $f_1,g_1 \in \ker(I-T)$ hold true.
This implies \lq\lq (iii) $\Rightarrow$ (iv)". 
The converse also follows from Proposition \ref{prop-gen-ker}.
\end{proof}

This proposition indicates that the eigenfunctions of $U|_{\mathcal{L}}$ associated with the eigenvalues $\lambda = \pm 1$ are characterized by purely generalized kernels $X_\mp = \ker(I\mp \tilde T)^2 \setminus \ker(I\mp \tilde T)$ up to $\ker(L)$.

\subsection{The final result}
\label{section-final}
Summarizing the above arguments, we have the following spectral mapping theorem of $U$.

\begin{thm}[Spectral Mapping Theorem of $U$]
\label{thm-spec}
Let $\varphi_{QW}(x)=(x+x^{-1})/2$ be the Joukowsky transform. Then
\begin{equation}
\label{spec-U}
\ker (\lambda I - U|_\mathcal{L}) =
                \begin{cases}
                 \; L(\ker(\lambda I-\tilde{T})) & \text{: $\lambda\notin \{\pm 1\}$}, \\
                  \{L\psi\mid \psi = (f,g)^T, f,g\in \ker(I-T), f+g\not = 0\} &\\
                  \quad= L(\ker(I-\tilde T)^2 \setminus \ker(I-\tilde T)) &  \text{: $\lambda=1$},\\
                  \{L\psi\mid \psi = (f,g)^T, f,g\in \ker(I+T), f-g\not = 0\} &\\
                  \quad= L(\ker(I+\tilde T)^2 \setminus \ker(I+\tilde T)) &  \text{: $\lambda=-1$}.
                \end{cases}
\end{equation}
In particular, we have
\begin{equation}
{\rm Spec}(U|_\mathcal{L})={\rm Spec}(\tilde{T})=\varphi_{QW}^{-1}({\rm Spec}(T)).
\end{equation}
For $\lambda \not = \pm 1$, the eigenspace $L(\ker(\lambda I-\tilde{T}))$ is given by
\begin{equation*}
\left\{d_A^\ast f_{\varphi_{QW}(\lambda)}-\lambda d_B^\ast f_{\varphi_{QW}(\lambda)} \mid  f_{\varphi_{QW}(\lambda)}\in \ker(\varphi_{QW}(\lambda) I -T)\right\}.
\end{equation*}
The spectrum of $U|_{\mathcal{L}^\bot}$ has the following relationship : ${\rm Spec}(U|_{\mathcal{L}^\bot}) = {\rm Spec}(-S)$.
Moreover, we have
\begin{equation}
\label{spec-Ubot}
\ker (\lambda I - U|_{\mathcal{L}^\bot}) =
                \begin{cases}
                 \; 0 & \text{: $\lambda\not = \pm 1$},\\
                  \ker (d_A)\cap \ker(I+S) &  \text{: $\lambda=1$},\\
                  \ker (d_A)\cap \ker(I-S) &  \text{: $\lambda=-1$}.
                \end{cases}
\end{equation}
\end{thm}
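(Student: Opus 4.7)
The plan is to assemble Theorem~\ref{thm-spec} from the pieces already prepared, splitting the argument along the decomposition $K_2 = \mathcal{L} \oplus \mathcal{L}^\bot$ and then, within $\mathcal{L}$, along the eigenvalue regimes $\lambda \notin \{\pm 1\}$ and $\lambda \in \{\pm 1\}$. Each branch of (\ref{spec-U}) and (\ref{spec-Ubot}) corresponds to one of these subcases, and the global spectral identities fall out as a corollary once all kernels are identified.

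For $\lambda \neq \pm 1$ on $\mathcal{L}$, I would invoke Proposition~\ref{prop-equiv} (case (C1)) to reduce $U|_{\mathcal{L}}\phi = \lambda\phi$ to $(\lambda I - \tilde T)\psi = 0$ modulo $\ker L$, and then quote Lemma~\ref{lem-C1} for both the characterization $\lambda = e^{\pm i \arccos \mu}$ with $\mu = \varphi_{QW}(\lambda) \in {\rm Spec}(T)$ and the explicit form $(d_A^\ast - \lambda d_B^\ast)\ker(\varphi_{QW}(\lambda) I - T)$ of the eigenspace. This delivers the first branch of (\ref{spec-U}) verbatim.

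For $\lambda = \pm 1$, case (C2) applies and I would use Proposition~\ref{prop-pm1}: its equivalence (i)~$\Leftrightarrow$~(iv) gives exactly the concrete description $\{L\psi \mid \psi = (f,g)^T,\ f,g\in \ker(I \mp T),\ f\pm g \neq 0\}$. To re-express this as $L(\ker(I\mp\tilde T)^2 \setminus \ker(I\mp\tilde T))$, I would apply Proposition~\ref{prop-gen-ker}, which identifies $\ker(I\mp\tilde T)^2 = \mathbb{C}^2 \otimes \ker(I\mp T)$ and $\ker(I\mp\tilde T) = \binom{1}{\mp 1}\otimes \ker(I\mp T)$, making the condition $f\pm g\neq 0$ equivalent to $\psi \notin \ker(I\mp\tilde T)$. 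For the spectral identity ${\rm Spec}(U|_{\mathcal{L}}) = {\rm Spec}(\tilde T) = \varphi_{QW}^{-1}({\rm Spec}(T))$, I would combine Lemma~\ref{lem-iden-not-pm1} (which handles $\lambda \neq \pm 1$) with the observation that $\pm 1$ belongs to each of the three sets precisely when $\ker(I\mp T) \neq \{0\}$: use Proposition~\ref{prop-gen-ker} for $\tilde T$, use Proposition~\ref{prop-pm1}(iv) with $g = 0$ for $U|_{\mathcal{L}}$, and note $\varphi_{QW}(\pm 1) = \pm 1$.

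For the birth part $U|_{\mathcal{L}^\bot}$, I would first observe that any $\phi \in \mathcal{L}^\bot = \ker(d_A)\cap \ker(d_B)$ (by (\ref{Lbot})) satisfies $C\phi = (2d_A^\ast d_A - I)\phi = -\phi$, so $U\phi = SC\phi = -S\phi$; hence $U|_{\mathcal{L}^\bot} = -S|_{\mathcal{L}^\bot}$, and since $S$ is self-adjoint unitary its spectrum lies in $\{\pm 1\}$, ruling out the $\lambda \neq \pm 1$ branch of (\ref{spec-Ubot}). For $\lambda = 1$ (respectively $\lambda = -1$), I would combine $\phi \in \ker(d_A)\cap \ker(d_B)$ with $-S\phi = \phi$ (resp.\ $-\phi$), noting that on $\ker(I \pm S)$ the relation $d_B = d_A S$ makes $d_A\phi = 0 \Leftrightarrow d_B\phi = 0$, collapsing the two membership conditions to the single one $\phi \in \ker(d_A) \cap \ker(I \pm S)$ as displayed.

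The main obstacle is the re-expression in the $\lambda = \pm 1$ case: passing between the explicit $(f,g)$ description and the compact $L(\ker(I\mp\tilde T)^2 \setminus \ker(I\mp\tilde T))$ description demands that $L$ be injective on $X_\mp := \ker(I\mp\tilde T)^2 \setminus \ker(I\mp\tilde T)$ into the eigenspace, i.e.\ that $X_\mp \cap \ker L = \emptyset$. I would deduce this from Lemma~\ref{lem-directsum} with $U = \ker(I\mp\tilde T)$, $V = \ker(I\mp\tilde T)^2$, $W = \ker(I\pm\tilde T)$, together with $\ker L = \ker(I-\tilde T) \oplus \ker(I+\tilde T)$ (Lemma~\ref{kerL}) and the fact that $\ker(I\mp\tilde T)^2 \cap \ker(I\pm\tilde T) = \{0\}$, which is read off from Proposition~\ref{prop-gen-ker} using that the eigenspaces $\ker(I-T)$ and $\ker(I+T)$ of the self-adjoint $T$ are orthogonal. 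Once this injectivity is in hand, the remaining bookkeeping is routine.
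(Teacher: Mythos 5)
Your proposal is correct and takes essentially the same route as the paper's own proof: the $\lambda=\pm 1$ branches via Proposition~\ref{prop-pm1} (re-expressed through Proposition~\ref{prop-gen-ker}), the $\lambda\neq\pm 1$ branch via Lemmas~\ref{lem-iden-not-pm1} and~\ref{lem-C1}, and the birth part by observing $U\phi=-S\phi$ on $\mathcal{L}^\bot$ and collapsing the kernel conditions through $d_B=d_AS$. The additional steps you work out explicitly --- the injectivity of $L$ on $X_\mp$ via Lemma~\ref{lem-directsum}, and the equivalence of $\pm 1$ membership across ${\rm Spec}(U|_{\mathcal{L}})$, ${\rm Spec}(\tilde T)$ and $\varphi_{QW}^{-1}({\rm Spec}(T))$ --- are details the paper leaves implicit inside the proof of Proposition~\ref{prop-pm1}, not a different argument.
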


\begin{proof}
Eigenstructures of $U|_{\mathcal{L}}$ with $\lambda = \pm 1$ directly follow from Proposition \ref{prop-pm1}.
Note that $\varphi_{QW} (\pm 1) = \pm 1$, which yield $\ker (\varphi_{QW}(\pm 1)I-T) = \ker (I\mp T)$.
For $\lambda \not = \pm 1$, Lemma \ref{lem-iden-not-pm1} says ${\rm Spec}(U|_\mathcal{L})\setminus \{\pm 1\} = {\rm Spec}(\tilde{T})\setminus \{\pm 1\}$.
The spectral mapping property ${\rm Spec}(\tilde{T})\setminus \{\pm 1\} = \varphi_{QW}^{-1}({\rm Spec}(\tilde{T})\setminus \{\pm 1\}) = \varphi_{QW}^{-1}({\rm Spec}(\tilde{T}))\setminus \{\pm 1\}$ is the consequence of Lemma \ref{lem-C1}.

\par
Let $\phi \in \mathcal{L}^\bot$ be an eigenfunction of $U$. 
Then $U\phi = S(2d_A^\ast d_A - I)\psi = -S\phi$ and hence $\phi$ is an eigenfunction of $-S$.
Note that ${\rm Spec}(S) =\{\pm 1\}$, since $S$ is an involution, namely, self-adjoint and unitary.
In such a case, we also have $d_B \phi = d_A S\phi = \mp d_A \phi = 0$.
Thus the statement $\phi \in \mathcal{L}^\bot \cap \ker(I\pm S)$ is consequently equivalent to $\phi\in \ker(d_A) \cap \ker(I\pm S)$.
\par
Combination of these results yields our statement.
\end{proof}

Before stating the corollary of Theorem \ref{thm-spec}, we provide the following lemma.
\begin{lem}
\label{lem-ker-dAdB}
For $f\in K_1$, $f\in \ker(I\mp T)$ holds if and only if $d_A^\ast f = \pm d_B^\ast f$.
\end{lem}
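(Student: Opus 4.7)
The plan is to prove the two directions separately, with the forward direction being nearly immediate and the reverse direction requiring a short norm computation.

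For the easy direction ($\Leftarrow$), suppose $d_A^\ast f = \pm d_B^\ast f$. Applying $d_A$ to both sides and using Assumption \ref{ass-identity} ($d_A d_A^\ast = I$) together with the definition $T = d_A d_B^\ast$ gives $f = \pm T f$, i.e., $f \in \ker(I \mp T)$. This uses only the definitions and the standing assumption, no inner product manipulation.

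For the nontrivial direction ($\Rightarrow$), the plan is to show $d_A^\ast f \mp d_B^\ast f = 0$ by computing its squared norm in $K_2$ and checking it vanishes. Expanding
\begin{equation*}
\|d_A^\ast f \mp d_B^\ast f\|_{K_2}^2 = \|d_A^\ast f\|_{K_2}^2 + \|d_B^\ast f\|_{K_2}^2 \mp \langle d_A^\ast f, d_B^\ast f\rangle_{K_2} \mp \langle d_B^\ast f, d_A^\ast f\rangle_{K_2},
\end{equation*}
I will evaluate each of the four terms by moving one factor across the inner product. The first term equals $\langle d_A d_A^\ast f, f\rangle_{K_1} = \|f\|_{K_1}^2$ by Assumption \ref{ass-identity}. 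For the second, the key identity $d_B d_B^\ast = d_A S S^\ast d_A^\ast = d_A d_A^\ast = I$ (since $S$ is unitary) gives $\|d_B^\ast f\|_{K_2}^2 = \|f\|_{K_1}^2$ as well. For the cross terms, the crucial observation is that $d_B d_A^\ast = d_A S d_A^\ast = d_A d_B^\ast = T$ (using self-adjointness of $S$ in the middle equality, which gives $d_B^\ast = S d_A^\ast$), so both cross terms equal $\langle T f, f\rangle_{K_1} = \pm \|f\|_{K_1}^2$ by the hypothesis $T f = \pm f$.

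Substituting these values yields $\|d_A^\ast f \mp d_B^\ast f\|_{K_2}^2 = 2\|f\|_{K_1}^2 - 2\|f\|_{K_1}^2 = 0$, which gives $d_A^\ast f = \pm d_B^\ast f$ as desired. The main place to be careful is the bookkeeping of the upper and lower signs: the sign in the hypothesis $Tf = \pm f$ must match the sign in $d_A^\ast f = \pm d_B^\ast f$, and one has to track that the two factors of $\mp(\pm \|f\|^2)$ in the cross terms each contribute $-\|f\|^2$, producing the exact cancellation. There is no deeper obstacle; the whole argument rests on the two identities $d_A d_A^\ast = I$ and $d_B d_A^\ast = T$, both of which are immediate from the setup.
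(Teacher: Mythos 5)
Your proof is correct, and it takes a genuinely different route from the paper's. The paper handles both directions at once through a single chain of equivalences: $f\in\ker(I-T)$ iff $d_A(d_A^\ast f-d_B^\ast f)=d_B(d_A^\ast f-d_B^\ast f)=0$, i.e.\ iff $d_A^\ast f-d_B^\ast f\in\ker d_A\cap\ker d_B=\mathcal{L}^\bot$ by (\ref{Lbot}); since this vector manifestly lies in $\mathcal{L}={\rm Im}\,L$ and $\mathcal{L}\cap\mathcal{L}^\bot=\{0\}$, it must vanish. You instead split the biconditional: the direction $\Leftarrow$ by applying $d_A$ (a step that also occurs, in effect, inside the paper's chain), and the direction $\Rightarrow$ by expanding $\|d_A^\ast f\mp d_B^\ast f\|_{K_2}^2$ into four terms and cancelling them, using $d_Ad_A^\ast=d_Bd_B^\ast=I$ and $d_Bd_A^\ast=d_Ad_B^\ast=T$; both of these identities are indeed valid in this setting, since $d_B^\ast=Sd_A^\ast$ with $S$ self-adjoint unitary gives $d_Bd_B^\ast=d_ASSd_A^\ast=I$ and $d_Bd_A^\ast=d_ASd_A^\ast=d_Ad_B^\ast$. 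The underlying cancellation is the same in both arguments --- the paper's appeal to $\mathcal{L}\cap\mathcal{L}^\bot=\{0\}$ is exactly the abstract form of your norm computation, since an element of $\mathcal{L}\cap\mathcal{L}^\bot$ is orthogonal to itself and hence zero --- but the write-ups buy different things: yours is self-contained and purely computational, never invoking the characterization (\ref{Lbot}) of $\mathcal{L}^\bot$ or the orthogonal decomposition of $K_2$, needing only Assumption \ref{ass-identity} and the operator identities; the paper's is shorter, reuses structure already established earlier in the section, and makes visible that the only obstruction to $d_A^\ast f=\pm d_B^\ast f$ would be a nonzero element of $\mathcal{L}\cap\mathcal{L}^\bot$. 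Your sign bookkeeping is also sound: with matched upper/lower signs, each cross term contributes $\mp(\pm\|f\|^2)=-\|f\|^2$, producing the exact cancellation $2\|f\|^2-2\|f\|^2=0$.
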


\begin{proof}
We immediately have
\begin{align*}
f\in \mathrm{ker}(I-T) &\Leftrightarrow d_Bd_A^\ast  f=f \Leftrightarrow d_B(d_A^\ast f-d_B^\ast f)=d_A(d_A^\ast f-d_B^\ast f)=0 \\
& \Leftrightarrow d_A^\ast f-d_B^\ast f\in \mathcal{L}\cap (\ker d_A \cap \ker d_B)=\mathcal{L}\cap\mathcal{L}^\bot = \{0\}\\ 
& \Leftrightarrow d_A^\ast f=d_B^\ast f. 
\end{align*}
The same calculations yield $f\in \ker(I+T) \Leftrightarrow d_A^\ast f = -d_B^\ast f$.
\end{proof}

Summarizing the above statements with the preceding work \cite{HKSS}, we obtain the complete characterization of the spectrum of quantum walks on graphs.

\begin{cor}[\cite{HKSS}, \cite{HS2004}, Complete eigenstructure of $U$ on the graph $G$]
\label{cor-HKSS}
Assume that a connected graph $G$ is finite; namely, $|V(G)|$, $|E(G)| < \infty$.
Consider the Szegedy walk on $G$ given in Example \ref{ex-Szegedy}.
Then we have the following statements.
\begin{enumerate}
\item Integers $m_{\pm 1}\in \{0,1\}$ denote the multiplicity of eigenvalues $\pm 1$ of $T$, respectively, which are given by
\begin{equation*}
m_+ = \begin{cases}
	 1 & \text{$1\in {\rm Spec}(T)$}\\
	 0 & \text{otherwise}
\end{cases},\quad 
m_- = \begin{cases}
	 1 & \text{$1\in {\rm Spec}(T)$ and $G$ is bipartite}\\
	 0 & \text{otherwise}
\end{cases}.
\end{equation*}
See \cite{HKSS} for details of these definitions.
\par
Let $\varphi_{QW} : \mathbb{C}\to \mathbb{C}$ be the Joukowsky transform given by $\varphi_{QW}(x) = (x+x^{-1})/2$.
Then we have
\begin{equation*}
{\rm Spec}(U) = \varphi_{QW}^{-1}({\rm Spec}(T)) \cup \{+1\}^{M_{+1}} \cup \{-1\}^{M_{-1}},
\end{equation*}
where $M_{+1}=\max \{0, |E|-|V|+m_{+1}\}$, $M_{-1}=\max \{0, |E|-|V|+m_{-1}\}$.
\item The eigenfunctions of the eigenvalue $\lambda \in \varphi_{QW}^{-1}({\rm Spec}(T))$ generating $\mathcal{L}$ are given by 
\begin{equation*}
\begin{cases}
\frac{1}{\sqrt{2}|\sin \lambda|}(I-e^{i\lambda}S) d_A^\ast f, \text{ where }f\in \ker(\varphi_{QW}(\lambda)I-T)  & :\ \lambda \not = \pm 1,\\
d_A^\ast f, \text{ where } f\in \ker(I-T) & \text{: $\lambda=1$}.
\end{cases}
\end{equation*}
If $G$ is further assumed to be bipartite, then $U|_{\mathcal{L}}$ has the eigenvalue $-1$ and its eigenfunction is given by
\begin{equation*}
d_A^\ast f, \text{ where }f\in \ker(I+T).
\end{equation*}

The eigenspaces corresponding to eigenvalues $\{1\}^{M_{+1}}$ and $\{-1\}^{M_{-1}}$ except those corresponding to $\varphi_{QW}^{-1}(\pm 1)$ are described by 
\begin{equation}
\label{eigenG-1}
\ker (d_A)\cap \mathcal{H}^{S}_{-1} \quad \text{ and }\quad \ker (d_A)\cap \mathcal{H}^{S}_{+1},
\end{equation}
respectively. Here $\mathcal{H}^{S}_{\pm 1} = \ker (I\mp S)$.
\end{enumerate}
\end{cor}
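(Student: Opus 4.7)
My plan is to specialize Theorem \ref{thm-spec} to the Szegedy walk on a graph and translate each abstract eigenspace into the concrete graph-theoretic form stated in the corollary. I would split the argument into the three regimes (namely $\lambda\notin\{\pm1\}$, and $\lambda=\pm1$ arising from $\mathcal{L}$ and from $\mathcal{L}^\bot$), combine Theorem \ref{thm-spec} with Lemma \ref{lem-ker-dAdB} to collapse the abstract eigenfunctions into the stated ones, and close the multiplicity statement with a dimension argument.

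For $\lambda\notin\{\pm1\}$, Theorem \ref{thm-spec} already exhibits the eigenspace as $(d_A^\ast-\lambda d_B^\ast)f$ with $f\in\ker(\varphi_{QW}(\lambda)I-T)$. Substituting $d_B^\ast=Sd_A^\ast$ rewrites this as $(I-\lambda S)d_A^\ast f$, and writing $\lambda=e^{i\theta}$ a short computation using $d_Ad_A^\ast=I$, $T=d_Ad_B^\ast$, and $\varphi_{QW}(\lambda)=\cos\theta$ yields $\|(I-\lambda S)d_A^\ast f\|^2=2\sin^2\theta\,\|f\|^2$, producing the normalization $1/(\sqrt 2|\sin\theta|)$. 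For $\lambda=\pm1$ inside $\mathcal{L}$, Theorem \ref{thm-spec} presents the eigenspace as $L\psi$ with $\psi=(f,g)^T$ and $f,g\in\ker(I\mp T)$; Lemma \ref{lem-ker-dAdB} gives $d_A^\ast f=\pm d_B^\ast f$ on that kernel, so $L\psi=d_A^\ast f+d_B^\ast g$ collapses to $d_A^\ast(f\pm g)$, exhibiting the eigenfunction as $d_A^\ast h$ for a single $h\in\ker(I\mp T)$. The $\mathcal{L}^\bot$ eigenfunctions at $\lambda=\pm1$ are immediate from Theorem \ref{thm-spec} together with the definition $\mathcal{H}^S_{\pm1}=\ker(I\mp S)$.

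The main obstacle is the multiplicity formula $M_{\pm1}=\max\{0,|E|-|V|+m_{\pm1}\}$. I would first compute $\dim\mathcal{L}$ directly from Theorem \ref{thm-spec}: each $\mu\in{\rm Spec}(T)\setminus\{\pm1\}$ contributes $2\dim\ker(\mu I-T)$ via the two branches of $\varphi_{QW}^{-1}$, whereas $\mu=\pm1$ contributes only $m_{\pm1}$; summing and using $\dim K_1=|V|$ gives $\dim\mathcal{L}=2|V|-m_{+1}-m_{-1}$, hence $\dim\mathcal{L}^\bot=2|E|-2|V|+m_{+1}+m_{-1}=M_{+1}+M_{-1}$. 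The delicate remaining step is to split this total individually into $M_{\pm 1}=\dim(\ker d_A\cap\ker(I\pm S))$, where the graph structure genuinely enters: connectedness forces $m_{+1}=1$ and $|E|-|V|+1\ge 0$, while bipartiteness governs $m_{-1}$ and determines whether the two $S$-eigenspaces lose the same rank under $d_A$. I would invoke the incidence/cycle-space analysis of \cite{HKSS, HS2004} to identify $\dim(\ker d_A\cap\ker(I\pm S))$ with $\max\{0,|E|-|V|+m_{\pm1}\}$, at which point the proof is complete.
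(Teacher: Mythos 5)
Your proposal is correct and follows essentially the same route as the paper's own proof: specialize Theorem \ref{thm-spec}, collapse the $\lambda=\pm1$ eigenfunctions $L\psi=d_A^\ast f+d_B^\ast g$ to $d_A^\ast(f\pm g)$ via Lemma \ref{lem-ker-dAdB}, and defer the graph-specific multiplicity identities $M_{\pm1}=\max\{0,|E|-|V|+m_{\pm1}\}$ to the analysis of \cite{HKSS} and \cite{HS2004}, exactly as the paper does by attribution. The only differences are ones of detail rather than of route: you additionally verify the normalization $\|(I-\lambda S)d_A^\ast f\|^2=2\sin^2\theta\,\|f\|^2$ and the dimension count $\dim\mathcal{L}^\bot=M_{+1}+M_{-1}$, both of which the paper's terse proof leaves implicit.
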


\begin{proof}
By definition of $m_+$, we have $\dim \ker(I-T)=1$.
Thus $\psi = (f,g)^T$ in (\ref{spec-U}) is written as $(f,\alpha f)^T$ for $\alpha \in \mathbb{R}\setminus \{-1\}$.
By Lemma \ref{lem-ker-dAdB}, we have $d_A^\ast f = d_B^\ast f$.
It thus follows that the eigenfunction $L\psi$ of $U|_{\mathcal{L}}$ is written as $\beta d_A^\ast f$ for some $\beta \in \mathbb{R}\setminus \{0\}$.
\par
Properties that $G$ is bipartite, we have $\dim \ker (I+T) = 1$ by definition of $m_-$ (cf. \cite{HKSS}).
The same arguments as above thus yield (\ref{eigenG-1}).
\par
All remaining statements follow from Theorem \ref{thm-spec}.
\end{proof}

\begin{rem}\rm
It is shown in \cite{HKSS} that the discriminant operator $T$ has the eigenvalue $1$ {\em if and only if the underlying random walk on $G$ has a reversible measure}.
The operator $T$ has the eigenvalue $-1$ if the graph $G$ is bipartite, 
namely, its vertex-set can be partitioned into two parts $V_1$ and $V_2$ such that each edge has one vertex in $V_1$ and one vertex in $V_2$.
Spectral properties around $1$ of the twisted random walks has been well studied in \cite{HS2004} and also its induced quantum walk has been studied in \cite{HKSS}.
Analysis in \cite{HKSS} also indicates that the eigenstructure of ${\rm Spec}(U|_{\mathcal{L}^\bot})$ induces localization of $U$.
Roughly, a cycle structure on $G$ induces localization, 
since the geometric multiplicities of $\{\pm 1\}\cap U|_{\mathcal{L}^\bot}$ are described with the first Betti number of $G$, which becomes a motivation of extend quantum walks on graphs to new quantum walks on simplicial complexes \cite{MOS}. 
See \cite{HKSS} for more detailed description of $\ker (I\pm U|_{\mathcal{L}^\bot})$.
\end{rem}

\section{Conclusion}
In this paper, we have discussed the spectral mapping theorem of quantum walks of the general form containing well-known Grover and Szegedy walks on finite graphs.
In addition to known characterizations of eigenstructure for $\lambda \in \left({\rm Spec}(U|_{\mathcal{L}})\cap \{r\in \mathbb{C}\mid {\rm Re}r \in (-1,1)\}\right)$ and $\lambda\in {\rm Spec}(U|_{\mathcal{L}^\bot})$ in preceding works, we derived a more detailed description of $\pm 1$ as eigenvalues of $U|_\mathcal{L}$.
In particular, we have derived the spectral mapping theorem of $U|_{\mathcal{L}}$ without using the spectral decomposition of the discriminant operator $T$ to obtain the description of eigenstructures.
\par
We have seen all of the eigenvalues of $\tilde{T}$ describe whole eigenvalues of $U|_\mathcal{L}$ even if $\tilde{T}$ is not diagonalizable. 
In the case of $\lambda\in\mathrm{Spec}(U|_\mathcal{L})\setminus\{\pm 1\}$, $\mathrm{ker}(\lambda I-U|_\mathcal{L})$ is obtained by $L(\mathrm{ker}(\lambda I-\tilde{T}))$, 
on the other hand, in the case of the eigenvalue where the geometric multiplicities degenerate,  $\mathrm{ker}(\lambda I -U|_\mathcal{L})$ is described by purely generalized eigenspace of $\lambda I-\tilde{T}$. 
This observation is a by-product of introducing our new method. 
\par
Applicability of our approach for describing ${\rm Spec}(U)$ in infinite dimensional setting (e.g. \cite{O2016, SS2015-1, SS2015-2}) remains open.
The difficulty comes from the essential spectrum of $T$.
Suitable assumptions for quantum walk models and careful treatments of ${\rm Spec}(T)$ will yield the extension of our results in infinite dimensional setting, which will be our future work.

\section*{Acknowledgements}
KM was partially supported by Coop with Math Program, a commissioned project by MEXT. OO was partially supported by JSPS KAKENHI Grant Number 24540208. ES was was partially supported by JSPS Grant-in-Aid for Young Scientists (B) (No. 25800088) and Japan-Korea Basic Scientific Cooperation Program \lq\lq Non-commutative Stochastic Analysis; New Aspects of Quantum White Noise and Quantum Walks" (2015--2016).

\bibliographystyle{plain}
\bibliography{proceeding_YNU}

\end{document}